\providecommand\ignore[1]{}% zum Auskommentierenss
\tikzstyle{round corners}=[rounded corners=0.5ex]
\providecommand{\napx}[1]{#1}%nonappendix version
\providecommand{\apx}[1]{}%appendix version
\providecommand{\shortv}[1]{}%short version
\providecommand{\longv}[1]{#1} %long version
\providecommand{\ap}[1]{} %application condition
\providecommand{\forget}[1]{} %simplification
\pgfplotsset{compat=1.15}
\renewcommand{\bigvee}{\vee}
\renewcommand{\epsilon}{\varepsilon}
\renewcommand{\phi}{\varphi}
\renewcommand{\select}{\mathrm{sel}}
\newcommand{\labb}{\lambda}
\newcommand{\ssys}{\textbf{\text{S}}}
\newcommand{\senv}{\textbf{\text{E}}}
\newcommand{\sys}{s}
\newcommand{\env}{e}
\newcommand{\Sys}{\mathcal{S}}
\newcommand{\SysEnv}{\Sys \Env}
\newcommand{\Env}{\mathcal{E}}
\newcommand{\ntop}{\raisebox{-0.7mm}{\circlearound{$\top$}}}
\newcommand{\on}{\raisebox{-0.7mm}{\circlearound{$\senv$}}}
\newcommand{\off}{\raisebox{-0.7mm}{\circlearound{$\ssys$}}}
\newcommand{\Pot}{\mathfrak{P}}
\newcommand{\LTL}{\text{LTL}}
\newcommand{\CTL}{\text{CTL}}
\newcommand{\LTLt}{}
\newcommand{\opX}{\textbf{\text{X}}}
\newcommand{\opW}{\textbf{\text{W}}}
\newcommand{\opU}{\textbf{\text{U}}}
\newcommand{\opG}{\textbf{\text{G}}}
\newcommand{\m}{\text{m}}
\newcommand{\prm}{\text{prm}}
\newcommand{\opA}{\textbf{\text{A}}}
\newcommand{\opE}{\textbf{\text{E}}}
\newcommand{\opAX}{\textbf{\text{AX}}}
\newcommand{\opEX}{\textbf{\text{EX}}}
\newcommand{\Op}{\textbf{\textit{O}}}
\newcommand{\opEF}{\textbf{\text{EF}}}
\newcommand{\opAG}{\textbf{\text{AG}}}
\newcommand{\opEG}{\textbf{\text{EG}}}
\newcommand{\opAU}{\textbf{\text{AU}}}
\newcommand{\opEU}{\textbf{\text{EU}}}
\newcommand{\opAW}{\textbf{\text{AW}}}
\newcommand{\opEW}{\textbf{\text{EW}}}
\newcommand{\FinSeq}{\text{FinSeq}(G, \R)}
\newcommand{\app}{\textnormal{app}}
\newcommand{\pred}{\textnormal{pred}}
\newcommand{\suc}{\textnormal{succ}}
\renewcommand{\ext}{\text{ext}}
\newcommand{\st}{*}
\renewcommand{\prule}{r}
\renewcommand{\emptyset}{\varnothing}
\newcommand{\PCS}{\textnormal{LTL}_{\textnormal{CS}}}
\newcommand{\GR}{\textnormal{LTL}_{\textnormal{lm}}}
\newcommand{\kSC}{\textnormal{LTL}_{k\textnormal{SC}}}
\newcommand{\PCS}{\sigma}
\newcommand{\GR}{\psi}
\newcommand{\kSC}{\psi_k}
\newcommand{\PCSctl}{\xi}
\newcommand{\kWC}{\zeta_k}
\newcommand{\SysEnvC}{\overline{\mathcal{SE}}}
\definecolor{orange-red}{rgb}{1.9, 0.5, 0.3}
\definecolor{whitegray}{rgb}{0.9, 0.9, 0.9}
\definecolor{lightergray}{rgb}{0.82, 0.82, 0.82}
\definecolor{new}{rgb}{0, 0, 0}
\tikzstyle{norm} = [thick]
\tikzstyle{imp}=[>=stealth,double distance=2pt, double,->]
\tikzstyle{track} = [semithick,
\tikzstyle{blocked} = [semithick,
\tikzstyle{innerWhite} = [semithick, white,line width=1.4pt, shorten >= 4.5pt] 
\tikzstyle{flow} = [thick,
\tikzstyle{innerWhite} = [semithick, white,line width=1.4pt, shorten >= 4.5pt] 
\newcommand{\track}{
\raisebox{-2mm}{
\begin{tikzpicture}
\node[shape=circle,draw,inner sep=2pt] (1) at (0,0) {};
\node[shape=circle,draw,inner sep=2pt] (2) at (1,0) {};
\draw[track] (1) to (2); 
\node[below] at (0,-0.1) {\scriptsize \textbf{1}};
\node[below] at (1,-0.1) {\scriptsize \textbf{2}};
\end{tikzpicture}}
}
\newcommand{\trackO}{
\begin{tikzpicture}
\node[shape=circle,draw,inner sep=2pt] (1) at (0,0) {};
\node[shape=circle,draw,inner sep=2pt] (2) at (1,0) {};
\draw[track] (1) to (2); 

\end{tikzpicture}
}
\newcommand{\onecar}{
\begin{tikzpicture}
\node[shape=circle,draw,inner sep=2pt] (1) at (0,0) {};
\node[shape=circle,draw,inner sep=2pt] (2) at (1,0) {};

\node[
  draw,
  align=center,
  regular polygon,
  regular polygon sides=4,
  inner sep=-1pt, fill=white] (car) at (0.5,0.5) {\faAutomobile};
\draw[track] (1) to (2); 
\draw[thick, bend left] (1) to (car);
\draw[thick, bend left] (car) to (2);

\node[below] at (0,-0.1) {\scriptsize \textbf{1}};
\node[below] at (1,-0.1) {\scriptsize \textbf{2}};
\end{tikzpicture}}
\newcommand{\twocar}{
\begin{tikzpicture}
\node[shape=circle,draw,inner sep=2pt] (1) at (0,0) {};
\node[shape=circle,draw,inner sep=2pt] (2) at (1,0) {};

\node[
  draw,
  align=center,
  regular polygon,
  regular polygon sides=4,
  inner sep=-1pt, fill=white] (car1) at (0.5,0.5) {\faAutomobile};
\node[
  draw,
  align=center,
  regular polygon,
  regular polygon sides=4,
  inner sep=-1pt, fill=white] (car2) at (0.5,-0.5) {\faAutomobile};

\draw[track] (1) to (2);

\draw[thick, bend left] (1) to (car1);
\draw[thick, bend left] (car1) to (2);
\draw[thick, bend right] (1) to (car2);
\draw[thick, bend right] (car2) to (2);

\node[below] at (-0.06,-0.1) {\scriptsize \textbf{1}};
\node[below] at (1.06,-0.1) {\scriptsize \textbf{2}};

 \end{tikzpicture}}
\newcommand{\twocarNoIndex}{
\begin{tikzpicture}
\node[shape=circle,draw,inner sep=2pt] (1) at (0,0) {};
\node[shape=circle,draw,inner sep=2pt] (2) at (1,0) {};

\node[
  draw,
  align=center,
  regular polygon,
  regular polygon sides=4,
  inner sep=-1pt, fill=white] (car1) at (0.5,0.5) {\faAutomobile};
\node[
  draw,
  align=center,
  regular polygon,
  regular polygon sides=4,
  inner sep=-1pt, fill=white] (car2) at (0.5,-0.5) {\faAutomobile};

\draw[track] (1) to (2);

\draw[thick, bend left] (1) to (car1);
\draw[thick, bend left] (car1) to (2);
\draw[thick, bend right] (1) to (car2);
\draw[thick, bend right] (car2) to (2);

%\node[below] at (-0.06,-0.1) {\scriptsize \textbf{1}};
%\node[below] at (1.06,-0.1) {\scriptsize \textbf{2}};

 \end{tikzpicture}}
\newcommand{\moveL}{
\begin{tikzpicture}
\node[shape=circle,draw,inner sep=2pt] (1) at (0,0) {};
\node[shape=circle,draw,inner sep=2pt] (2) at (1,0) {};
\node[shape=circle,draw,inner sep=2pt] (3) at (2,0) {};

\node[
  draw,
  align=center,
  regular polygon,
  regular polygon sides=4,
  inner sep=-1pt, fill=white] (car) at (0.5,0.5) {\faAutomobile};

\node[below] at (0,-0.1) {\scriptsize \textbf{1}};
\node[below] at (1,-0.1) {\scriptsize \textbf{2}};
\node[below] at (2,-0.1) {\scriptsize \textbf{3}};

\draw[thick, bend left] (1) to (car);
\draw[thick, bend left] (car) to (2);

\draw[track] (1) to (2);
\draw[track] (2) to (3); \end{tikzpicture}}
\newcommand{\moveR}{
\begin{tikzpicture}
\node[shape=circle,draw,inner sep=2pt] (1) at (0,0) {};
\node[shape=circle,draw,inner sep=2pt] (2) at (1,0) {};
\node[shape=circle,draw,inner sep=2pt] (3) at (2,0) {};

\node[below] at (0,-0.1) {\scriptsize \textbf{1}};
\node[below] at (1,-0.1) {\scriptsize \textbf{2}};
\node[below] at (2,-0.1) {\scriptsize \textbf{3}};

\node[
  draw,
  align=center,
  regular polygon,
  regular polygon sides=4,
  inner sep=-1pt, fill=white] (car) at (1.5,0.5) {\faAutomobile};

\draw[thick, bend left] (2) to (car);
\draw[thick, bend left] (car) to (3);

\draw[track] (1) to (2);
\draw[track] (2) to (3); \end{tikzpicture}}
\newcommand{\blocked}{
\raisebox{-1mm}{
\begin{tikzpicture}
\node[shape=circle,draw,inner sep=2pt] (1) at (0,0) {};
\node[shape=circle,draw,inner sep=2pt] (2) at (1,0) {};
\draw[blocked] (1) to (2); 
%\node[below] at (0.5,0) { \textbf{$\lightning$}};
\node[
  draw,semithick,
  align=center,
  regular polygon,
  regular polygon sides=3,
  inner sep=-2.3pt, fill=orange-red] at (0.5,-0.02) 
{\raisebox{1.1mm}{\textbf{$\lightning$}}};

\node[below] at (0,-0.1) {\scriptsize \textbf{1}};
\node[below] at (1,-0.1) {\scriptsize \textbf{2}};
\end{tikzpicture}}
}
\newcommand{\blockedO}{
\begin{tikzpicture}
\node[shape=circle,draw,inner sep=2pt] (1) at (0,0) {};
\node[shape=circle,draw,inner sep=2pt] (2) at (1,0) {};
\draw[blocked] (1) to (2); 
%\node[below] at (0.5,0) { \textbf{$\lightning$}};
\node[
  draw,semithick,
  align=center,
  regular polygon,
  regular polygon sides=3,
  inner sep=-2.3pt, fill=orange-red] at (0.5,-0.02) 
{\raisebox{1.1mm}{\textbf{$\lightning$}}};

\end{tikzpicture}
}
\newcommand{\blockedtwo}{
\begin{tikzpicture}
\node[shape=circle,draw,inner sep=2pt] (1) at (0,0) {};
\node[shape=circle,draw,inner sep=2pt] (2) at (1,0) {};
\draw[blocked] (1) to (2); 
%\node[below] at (0.5,0) { \textbf{$\lightning$}};

\draw[track] (1) to (2);

\node[below] at (-0.06,-0.1) {\scriptsize \textbf{1}};
\node[below] at (1.06,-0.1) {\scriptsize \textbf{2}};

\node[
  draw,
  align=center,
  regular polygon,
  regular polygon sides=4,
  inner sep=-1pt, fill=white] (car1) at (0.5,0.5) {\faAutomobile};
\node[
  draw,
  align=center,
  regular polygon,
  regular polygon sides=4,
  inner sep=-1pt, fill=white] (car2) at (0.5,-0.5) {\faAutomobile};

\draw[thick, bend left] (1) to (car1);
\draw[thick, bend left] (car1) to (2);
\draw[thick, bend right] (1) to (car2);
\draw[thick, bend right] (car2) to (2);
\node[
  draw,semithick,
  align=center,
  regular polygon,
  regular polygon sides=3,
  inner sep=-2.3pt, fill=orange-red] at (0.5,-0.02) 
{\raisebox{1.1mm}{\textbf{$\lightning$}}};

\end{tikzpicture}
}
\newenvironment{bem}{\begin{remark}}{\end{remark}}
\newenvironment{defi}{\begin{definition}}{\end{definition}}
\newenvironment{propos}{\begin{proposition}}{\end{proposition}}
\newenvironment{asm}{\begin{assumption}}{\end{assumption}}
\renewcommand{\V}{V}
\renewcommand{\E}{E}
\apptocmd{\sloppy}{\hbadness 10000\relax}{}{}
\begin{document}
\renewcommand*{\thefootnote}{\fnsymbol{footnote}}
\title{Modeling Adverse Conditions in the Framework\\  of Graph Transformation Systems\footnote{This work is supported by the German Research Foundation through the Research Training Group DFG GRK 1765 SCARE: \emph{System Correctness under Adverse Conditions} (\href{https://www.uol.de/scare}{\texttt{https://www.uol.de/scare}}).}}
\author{Okan \"Ozkan
\institute{Department of Computing Science, \\ 
Carl von Ossietzky University of Oldenburg \\
Oldenburg, Germany}
\email{o.oezkan$@${informatik.uni-oldenburg.de}}}
\def\titlerunning{Modeling Adverse Conditions in the Framework of Graph Transformation Systems}
\def\authorrunning{Okan \"Ozkan}
\maketitle

\begin{abstract} \ignore{\scriptsize} The concept of adverse conditions addresses systems interacting with an adversary environment and finds use also \textcolor{new}{in the development of new technologies}. We present an approach for modeling adverse conditions by graph transformation systems. In contrast to other approaches for graph-transformational interacting systems, the presented main constructs are graph transformation systems. We introduce joint graph transformation systems which involve a system, an interfering environment, and an automaton modeling their interaction. \textcolor{new}{For joint graph transformation systems, we introduce notions of (partial) correctness under adverse conditions, which contain the correctness of the system and a recovery condition}. As main result, we show that two instances of correctness, namely $k$-step correctness (recovery in at most $k$ steps after an environment intervention) and last-minute correctness (recovery until next environment intervention) are expressible in LTL (linear temporal logic)\longv{, and that a weaker notion of $k$-step correctness is expressible in CTL (computation tree logic)}.
\end{abstract}

\ignore{%oldtitleOLD
\renewcommand*{\thefootnote}{\fnsymbol{footnote}}
\begin{center} \textbf{\Large{Modeling Adverse Conditions in the Framework}} \vspace*{3mm} \\ \textbf{\Large{of Graph Transformation Systems\footnote{This work is supported by the German Research Foundation through the Research Training Group DFG GRK 1765 SCARE: \emph{System Correctness under Adverse Conditions} (\texttt{www.uol.de/scare}).}}} \shortv{\vspace*{2mm}}\longv{\vspace*{3mm} \\ \large{\it Long Version}  \vspace*{6mm}} \\ \textbf{{Okan \"Ozkan}} \vspace*{2mm} \\ Carl von Ossietzky University of Oldenburg  \\ \texttt{okan.oezkan$@${uni-oldenburg.de} } \vspace*{1mm}   \\  \begin{abstract} \ignore{\scriptsize} \textbf{Abstract.} The concept of adverse conditions finds use in many areas. However, investigations on how to model adverse conditions by graph transformation systems are sparse. We introduce joint graph transformation systems which involve a system, an interfering environment, and a finite automaton regulating their interaction. Notions of (partial) correctness for joint graph transformation systems contain the correctness of the system and a recovery condition. As main result, we show that two instances of correctness, namely $k$-step correctness (recovery $k$ steps after an environment intervention) and last-minute correctness (recovery until next environment intervention) are expressible in LTL (linear temporal logic)\longv{, and that a weaker notion of $k$-step correctness is expressible in CTL (computation tree logic)}.
\end{abstract} \end{center}
}

\section{Introduction}
\label{intro}
\renewcommand*{\thefootnote}{\arabic{footnote}}
\setcounter{footnote}{0}
System correctness under \textit{adverse conditions} is a topic of recent research. This concept addresses systems which interact with an environment and finds use \ignore{e.g. in distributed systems \cite{Theel15} and }also in the development of new technologies such as autonomous driving \cite{Schwammberger18} and neural networks \cite{Worzyk19}.\footnote{For further topics, see, e.g.,\ the publication list of the Research Training Group DFG GRK 1765 SCARE (\href{https://www.uol.de/en/academic-research/collaborative-research-projects/scare/publications}{\texttt{https://www.uol.de/en/academic-research/collaborative-research-projects/scare/publications}}).} \emph{Correctness} in this sense means that the interaction between system and environment satisfies desired behavioral properties. One assumes that the environment exhibits an only partially predictable behaviour and may have an adverse effect. \par
\textit{Graph transformation systems (GTS)}, as considered, e.g.,\ in \cite{Ehrig06}, are a visual but yet precise formalism for modeling a system. In this perception, system states are captured by graphs and state changes by graph transformations.  \par
Topics in graph transformation are growing in popularity. While several concepts in the scope of graph-transformational interacting systems such as graph transformation systems with dependencies \cite{Corradini09}, distributed graph transformation \cite{Taentzer99}, autonomous units \cite{Kreowski06}, graph-transformational swarms \cite{Kreowski13}, and graph-transformational multi-agent systems \cite{Wang06} have been introduced, there is fewer research which explicitly considers the special case of adverse conditions, e.g.,\ \cite{Flick16, Peuser18}. Adverse conditions are modeled by an \textit{interfering environment}. Our approach focuses on the particular case in which the interaction between system and environment is regulated. A natural way for modeling the interaction is to utilize automata. The idea is to construct a \textit{joint system} which involves the system, the environment, and a regulation automaton. 
\begin{center} \tikzset{
    state/.style={
           rectangle,
           rounded corners,
           draw=black,
           minimum height=2em,
           inner sep=2pt,
           text centered,
           },
}

\hspace*{1.2cm}\begin{tikzpicture}[->,>=stealth']
\node[state,fill=whitegray] (box1) at (0,0.4) { \begin{tabular}{c} \hspace*{1mm} \\ construct \\ \hspace*{1mm} \end{tabular}};

\draw (-5,1) -- node[above]{system}(-1,1);  
\draw (-5,0.4) -- node[above]{environment}(-1,0.4);
\draw (-5,-0.2) -- node[above]{regulation automaton}(-1,-0.2); 

\draw (1,0.4) -- node[above]{joint system}(5,0.4); 

\end{tikzpicture} \vspace*{2mm}\newline
\hspace*{-6mm}Fig. 1: construction of the joint system \end{center}  \vspace*{2mm}
In contrast to other approaches, the constructed joint system is in the same class as the system and the environment, i.e.,\ a graph transformation system. This can be advantageous for transfering well-known concepts for graph transformation systems such as model checking \cite{GROOVE} and well-structuredness \cite{Koenig14} into this framework. \par Regarding the latter, the correctness notion for GTSs is also applicable to \textit{joint graph transformation systems}. The definition of (partial) correctness of a GTS w.r.t.\ a \emph{precondition} and a \emph{postcondition} means that for every graph satisfying the precondition, every graph derived via the GTS satisfies the postcondition. However, this turns out to be too restrictive \textcolor{new}{as a notion of correctness under adverse conditions}. The validity of the postcondition may be violated after an interference of the environment; nonetheless, the postcondition must be recovered. Correctness notions for joint GTSs contain the correctness of the system and a \textit{recovery condition}. We consider two instances of correctness notions, i.e.,\ recovery conditions: One possibility is to limit the maximal number of steps after which the postcondition must be recovered ($k$-\textit{step correctness}). By contrast, the notion of \textit{last-minute correctness} demands not a recovery after $k$ steps but a recovery until the environment interferes again. \ignore{These notions may also be seen as kinds of \emph{robustness} or \emph{resilience}.}\par
As main result, we show that correctness of joint GTSs in both cases can be formalized in terms of a temporal logic, namely LTL (\emph{linear temporal logic}). \longv{In addition to that, we show that a weaker notion of $k$-step correctness is expressible in CTL (\emph{computation tree logic}). }These reductions are favorable since, e.g.,\ the tool GROOVE \cite{GROOVE} provides a way for LTL\longv{/CTL} model checking for graph-based systems. \par
The concepts are illustrated by a simple traffic network system under stress. \begin{example}[traffic network system] \label{ex}We model a \emph{traffic network system (TNS)} where the adverse conditions are blocked connections due to jams, accidents, or damages on the tracks. Nodes correspond to traffic junctions while an edge of the form $\trackO$ represents a connection between them. An edge labeled with the symbol \faAutomobile\ describes a vehicle in the traffic line between the traffic junctions. The rule \texttt{Ascend}/\texttt{Descend} formalizes the ascent/descent of a vehicle onto/off the traffic network. We realize a traffic flow by the rule \texttt{Move}, which enables vehicles to change their position to an adjacent track. Blocked tracks ($\blockedO$) occur only if a connection is highly frequented, i.e.,\ if $v$ vehicles are on the same track; this is formalized by the environment rule \texttt{Block}. For the sake of simplicity, we choose $v=2$. Blocked tracks can be repaired by the rule \texttt{Repair}. Formally, we consider the following system $\Sys$ and environment $\Env$: \vspace*{3mm}\\  \ignore{ \vspace*{3mm} \begin{center} $\Sys \left\{ \begin{array}{ll} \texttt{Ascend} &: \left\tuple{ \track \Rightarrow \onecar \right} \\ \texttt{Descend} &: \left\tuple{ \onecar \Rightarrow \track \right} \\ \texttt{Move} &: \left\tuple{ \moveL \hspace*{2mm}\Rightarrow \hspace*{2mm}\moveR \right}  \\ \texttt{Repair} &: \left\tuple{ \blocked \Rightarrow \track \right} \end{array} \right.$ \\ \vspace*{2mm}\end{center}
\hspace*{35.5mm}$\Env \left\{  \begin{array}{ll}\texttt{Block}&\hspace*{4mm}: \left\tuple{ \twocar \Rightarrow \blockedtwo \right} \end{array} \right.$  \vspace*{3mm} }%thisold
\scalebox{0.93}{\begin{tabular}{c}\begin{minipage}[h]{0.6\textwidth} $\Sys \left\{ \begin{array}{ll} \texttt{Ascend} &: \left\tuple{ \track \Rightarrow \onecar \right} \\ \texttt{Descend} &: \left\tuple{ \onecar \Rightarrow \track \right} \\ \texttt{Move} &: \left\tuple{ \moveL \hspace*{2mm}\Rightarrow \hspace*{2mm}\moveR \right}  \\ \texttt{Repair} &: \left\tuple{ \blocked \Rightarrow \track \right} \end{array} \right.$ \end{minipage}\end{tabular}}  \hspace*{-1cm}\scalebox{0.93}{ \begin{tabular}{c}\vspace*{3.1cm}\begin{minipage}[h]{0.45\textwidth} $\Env \left\{  \begin{array}{ll}\texttt{Block}&\hspace*{5mm}: \left\tuple{ \twocar \Rightarrow \blockedtwo \right} \end{array} \right.$  \end{minipage} \end{tabular} } \begin{center}
Fig. 2: system and environment rules of the TNS \end{center}
The regulation automaton $A$ modeling the interaction between system and environment is given by the following figure:  \vspace*{3mm}
\begin{center}\scalebox{0.9}{\begin{tikzpicture}[-> ,>= stealth' , node distance=3cm]

 \node[state,fill=whitegray] (R){$q_0$};
 \node[state, fill=whitegray] (S) [right of=R]{$q_1$};

\node (start) [left = 1cm of R] {};

\path (start) edge node{} (R) 
(R) edge[bend left,above] node{\texttt{Block}} (S)
(R) edge[loop above, above] node{\small{\{\texttt{Move}, \texttt{Ascend}, \texttt{Descend}\}}} (S)

 (S) edge[bend left, below] node {\texttt{Repair}} (R) ;
 \end{tikzpicture} }\end{center}  \begin{center} Fig. 3: representation of the regulation automaton $A$ \end{center} \vspace*{3mm} 
The joint system is a graph transformation system and constructed as the union of the ``enriched'' rule sets of system and environment. By this enrichment, the regulation automaton is ``synchronized'' with a rule set.   \end{example}
This paper is organized as follows: In Section \ref{prelim}, we recall the basic notions of graph transformation, graph conditions, \textcolor{new}{LTL, and CTL}\shortv{ (LTL)}, before defining joint graph transformation systems in Section \ref{joint}. In Section \ref{correct}, we investigate \longv{three}\shortv{two} correctness notions for joint GTSs. Our main result showing \shortv{that both correctness notions can be formalized in LTL is proven in Section \ref{reduct}.\footnote{A long version of this paper containing an instance of correctness, which is expressible in CTL, is available under: http://formale-sprachen.informatik.uni-oldenburg.de/{$\sim$}skript/fs-pub/ModelingAdverseConditionsLongVersion{\_}OOe.pdf} In Section \ref{related}, we present related concepts. We close with a conclusion and an outlook in Section~\ref{conc}.}\longv{that the presented correctness notions can be formalized in LTL/CTL is proven in Section \ref{reduct}. In Section \ref{related}, we present related concepts. We close with a conclusion and an outlook \textcolor{new}{in Section \ref{conc}}.} \apx{ In the Appendix, one can find complementary proofs and examples (\ref{subsequent}) and an instance of correctness for joint GTSs, which can be expressed in CTL (\ref{CTL}).}

\section{Preliminaries} \label{prelim}
We recall basic notions of graphs, graph conditions, rules, and transformations and define temporal graph constraints corresponding to temporal formulas.
\subsection{Graph Transformation Systems}

\newcommand\circlearound[1]{%
  \tikz[baseline]\node[draw,shape=circle,anchor=base] {#1} ;}

In the following, we recall the definitions of graphs, graph conditions, rules, and graph transformation systems \cite{Ehrig06}.\ignore{Ehrig-Ehrig-Prange-Taentzer06b,Habel-Pennemann09aHabel-Plump01a} \textcolor{new}{We assume that the reader is familiar with pushouts, see, e.g.,\ \cite{Ehrig06}.}

A directed, labeled graph consists of a set of nodes and a set of edges where each edge is equipped with a source and a target node and where each node and edge is equipped with a label.
\begin{definition}[graphs \& morphisms]  A \emph{(directed, labeled) graph} (over a label alphabet $\Lambda$) is a tuple $G=\tuple{\V_G,\E_G,\sou_G,\tar_G,\labb_{\V,G},\labb_{\E,G}}$ where $\V_G$ and $\E_G$ are finite sets of \emph{nodes} (or \emph{vertices}) and \emph{edges}, 
$\sou_G,\tar_G\colon$ $\E_G\to \V_G$ are functions assigning \emph{source} and \emph{target}\/ to each edge, and $\labb_{\V,G}\colon\V_G\to\Lambda$, $\labb_{\E,G}\colon\E_G\to\Lambda$ are labeling functions. Given graphs $G$ and $H$, a \emph{(graph) morphism} $g\colon G \to H$ consists of functions $g_\V\colon\V_G\to\V_H$ and $g_\E\colon\E_G\to\E_H$ that preserve sources, targets, and labels, i.e., $g_\V\circ\sou_G=\sou_H\circ g_\E$, $g_\V\circ\tar_G=\tar_H\circ g_\E$, $\labb_{\V,G}=\labb_{\V,H}\circ g_\V$, $\labb_{\E,G}=\labb_{\E,H}\circ g_\E$. The morphism $g$ is \emph{injective (surjective)}\/ if $g_{\V}$ and $g_{\E}$ are injective (surjective), and an \emph{isomorphism}\/ if it is injective and surjective. In the latter case, $G$ and $H$ are \emph{isomorphic}, which is denoted by $G\cong H$. If $g$ is injective, we write also $g: G \injto H$. The composition of morphisms is defined componentwise.
\end{definition}
Graph conditions are nested constructs which can be represented as trees of morphisms equipped with quantifiers and Boolean connectives. Graph conditions and first-order graph formulas are expressively equivalent \cite{Rensink04,Habel09}. \newpage
\begin{definition}[graph conditions] The class of \emph{(graph) conditions} over a graph $P$ is defined inductively: (i) $\ctrue$ is a graph condition over $P$, (ii) $\exists(a,c)$ is a graph condition over $P$ where $a\colon P \injto C$ is an injective morphism and $c$ is a condition over $C$, (iii) for conditions $c$, $c'$ over $P$, $\neg c$ and $ c\land c'$ are conditions over~$P$. \\ Conditions over the empty graph~$\emptyset$ are called \emph{constraints}. In the context of rules, conditions are called \emph{application conditions}. \end{definition}
Graph conditions may be written in a more compact form: $\PE a$ abbreviates $\PE(a,\ctrue)$, $\cfalse$ abbreviates $\neg \ctrue$\ignore{ and $\PA(a,c)$ abbreviates $\neg\PE(a, \neg c)$}. The expressions $c \vee c'$ and $c\impl c'$ are defined as usual. For an injective morphism $a\colon P\DSinjto C$ in a condition, we just depict the codomain $C$ if the domain $P$ can be unambiguously inferred.
\begin{example}[no blocked track] The constraint $\text{NoBlocked}:=\neg \exists \left(  \blockedO \right) $ expresses that, intuitively speaking, there is no blocked track in the traffic network (see Example \ref{ex}).\end{example}
\ignore{%old
\begin{definition}[semantics] The semantics of graph conditions are defined inductively: (i) Any injective morphism $p\colon P\injto G$ \emph{satisfies} $\ctrue$. \\
\begin{tabular}{c}\hspace*{-2mm}\begin{minipage}[h]{10cm}
 (ii) An injective morphism $p$ \emph{satisfies} $\PE(a,c)$ with $a\colon P\injto C$ if there exists an injective morphism $q\colon C\injto G$ such that $q\circ a=p$ and $q$~satisfies~$c$. (iii) An injective morphism $p$ \emph{satisfies} $\neg c$ if $p$ does not satisfy $c$, and $p$ \emph{satisfies} $c \land c'$ if $p$ satisfies both $c$ and $c'$. 
\end{minipage}\end{tabular}
\hspace{0.5cm}
\begin{tabular}{c}\begin{minipage}[h]{4cm}
\tikz[node distance=2em,shape=rectangle,outer sep=1pt,inner sep=2pt]{
\node(P){$P$};
\node(G)[strictly below right of=P]{$G$};
\node(C)[strictly above right of=G]{$C,$};
\draw[monomorphism] (P) -- node[overlay,above](a){$a$} (C);
\draw[monomorphism] (P) -- node[overlay,below left]{$p$} (G);
\draw[altmonomorphism] (C) -- node[overlay,below right](q){$q$} (G);
\draw[draw=white] (a) -- node[overlay](tr1){=} (G);
\node(c)[outer sep=0pt,inner sep=0pt,node distance=0em,strictly right of=C]{\tikz[draw=black,fill=lightgray]{
\filldraw (0,0) -- (0.6,0.12) -- node[right,outer sep=1ex]{\footnotesize{$c$}} (0.6,0) -- (0.6,-0.12) -- (0,0);}};
\draw[draw=white] (q) -- node[overlay,sloped](tr1){$\models$} (c);
\node(Y)[node distance=0.2em,strictly right of=c]{$)$};
\node(X)[node distance=0.0em,strictly left of=P]{$\PE($};}
\end{minipage}\end{tabular}
We write  $p\models c$ if $p$ satisfies the condition $c$ (over $P$). A graph $G$ \emph{satisfies} a constraint $c$, $G\models c$, if the morphism $p\colon\emptyset\injto G$ satisfies~$c$. 
\end{definition} }

\begin{definition}[semantics] The semantics of graph conditions are defined inductively: (i) Any injective morphism $p\colon P\injto G$ \emph{satisfies} $\ctrue$. \\
 (ii) An injective morphism $p$ \emph{satisfies} $\PE(a,c)$ with $a\colon P\injto C$ if there exists an injective morphism $q\colon C\injto G$ such that $q\circ a=p$ and $q$~satisfies~$c$. 
\begin{center}\tikz[node distance=2em,shape=rectangle,outer sep=1pt,inner sep=2pt]{
\node(P){$P$};
\node(G)[strictly below right of=P]{$G$};
\node(C)[strictly above right of=G]{$C,$};
\draw[monomorphism] (P) -- node[overlay,above](a){$a$} (C);
\draw[monomorphism] (P) -- node[overlay,below left]{$p$} (G);
\draw[altmonomorphism] (C) -- node[overlay,below right](q){$q$} (G);
\draw[draw=white] (a) -- node[overlay](tr1){=} (G);
\node(c)[outer sep=0pt,inner sep=0pt,node distance=0em,strictly right of=C]{\tikz[draw=black,fill=lightgray]{
\filldraw (0,0) -- (0.6,0.12) -- node[right,outer sep=1ex]{\footnotesize{$c$}} (0.6,0) -- (0.6,-0.12) -- (0,0);}};
\draw[draw=white] (q) -- node[overlay,sloped](tr1){$\models$} (c);
\node(Y)[node distance=0.2em,strictly right of=c]{$)$};
\node(X)[node distance=0.0em,strictly left of=P]{$\PE($};}\end{center}
 (iii) An injective morphism $p$ \emph{satisfies} $\neg c$ if $p$ does not satisfy $c$, and $p$ \emph{satisfies} $c \land c'$ if $p$ satisfies both $c$ and $c'$. \\
We write  $p\models c$ if $p$ satisfies the condition $c$ (over $P$). A graph $G$ \emph{satisfies} a constraint $c$, $G\models c$, if the morphism $p\colon\emptyset\injto G$ satisfies~$c$. 
\end{definition}

\begin{bem} The validity of graph constraints is closed
under isomorphisms, i.e.,\ for every graph constraint $c$ and every
isomorphism $G\cong G'$, $G\models c$ iff $G'\models
c$. This can be shown by an induction over the structure of constraints. \end{bem}
\ignore{$\PE(x,\ctrue)\equiv\PE x$\\
$\PA(x,\ctrue)\equiv\ctrue$\\
$\PE(x,\cfalse)\equiv\cfalse$\\
$\PA(x,\false)\equiv\NE x$\\
$\PA(x,\PE(y,\cfalse)\equiv\PA(x,\cfalse)\equiv\NE x$\\
$\PE x,\PA(x,\false)\equiv\PE(x,\NE y)$\\}
Rules are specified by a pair of injective morphisms. For restricting the applicability of rules,  the rules are  equipped with a left application condition. Such a rule is applicable with respect to an injective ``match'' morphism from the left-hand side of the rule to a graph iff  the underlying plain rule is applicable and the match morphism satisfies the left application condition.

%%%Version
\ignore{
\begin{definition}[graph transformation rules and transformations] A~\emph{(graph transformation) rule}\/ $\prule = \tuple{p,\ac}$ consists  of a \emph{plain rule} $p=\brule{L}{K}{R}$ with injective morphisms $K\injto L$ and $K\injto R$ and an application condition $\ac$ over $L$. A rule $\tuple{p,\ctrue}$ is abbreviated by~$p$.
A \emph{direct transformation}\/ from a graph $G$ to a graph $H$ applying rule $\prule$ at an injective morphism $g$ consists of two pushouts\footnote{For definition \& existence of pushouts in the category of graphs, see e.g. \cite{Ehrig06}.} (1) and (2) as above where $g\models\ac$. 

\begin{tikzpicture}[node distance=2.5em,shape=rectangle,outer sep=1pt,inner sep=2pt,label distance=-1.25em]
\node(L){$L$};
\node(K)[strictly right of=L]{$K$};
\node(R)[strictly right of=K]{$R$};
\node(D)[strictly below of=K]{$D$};
\node(G)[strictly left of=D]{$G$};
\node(H)[strictly right of=D]{$H$};
%horizontal morphisms
\draw[altmonomorphism] (K) -- node[overlay,above]{\small $$} (L);
\draw[monomorphism] (K) -- node[overlay,above]{\small $$} (R);
\draw[altmonomorphism] (D) -- (G);
\draw[monomorphism] (D) -- (H);
%vertical morphisms 
\draw[monomorphism] (L) -- node[overlay,left](g){\small $g$} (G);
\draw[monomorphism] (K) -- node[overlay,left]{\small $d$} (D);
\draw[monomorphism] (R) -- node[overlay,right]{\small $h$}(H);
\draw[draw=none] (L) -- node[overlay]{\small (1)} (D);
\draw[draw=none] (R) -- node[overlay]{\small (2)} (D);
\node(acL)[outer sep=0pt,inner sep=0pt,node distance=0em,strictly left of=L]{
  \tikz[baseline,draw=black,fill=lightgray]{\filldraw (0,0) -- node[left,pos=0.9,overlay,outer sep=1em](acL){\small $\ac$} 
  (-0.6,0.12) -- (-0.6,-0.12) -- (0,0);}};
\draw[draw=none] (g) -- node[overlay,sloped](tr1){$\mathrel{=}\joinrel\mathrel{|}$} (acL);
 \end{tikzpicture}
We write $G\dder_{\prule,g} H$ or $G\dder_{\prule} H$ if there exists such a direct transformation and $G\not\dder_\prule$ if there is no graph $H$ such that $G\dder_\prule H$. Given graphs $G$, $H$\/  and a set $\R$\/ of rules, $G$ \emph{derives} $H$\/ by $\R$\/ if $G \cong H$ or there is a sequence of direct transformations $G=G_0\dder_{\prule_1}\dots \dder_{\prule_n}G_n=H$ with $\prule_1,\dots,\prule_n\in \R$. In this case we write $G\der_{\R} H$\/ or just $G \der H$.
\end{definition}
}

\ignore{%oldversion
\begin{definition}[rules \& transformations]
A~\emph{(graph transformation) rule} ~$r =~\tuple{p,\ac}$ (over $\Lambda$) con-
\begin{tabular}{c}\hspace*{-2mm}\begin{minipage}[h]{10cm}sists  of a \emph{plain rule} ~$p=\tuple{L \hookleftarrow K \injto R}$ with injective morphisms  $K\injto L$ and $K\injto R$ and an application condition $\ac$ over $L$. A rule $\tuple{p,\ctrue}$ is abbreviated by $p$ (where $L$, $K$, and $R$ are graphs over $\Lambda$). A \emph{(direct) transformation}\/ from a graph $G$ to a graph $H$ applying rule $\prule$ at an injective morphism $g$ consists of two \emph{pushouts}\ignore{\footnote{For definition \& existence of pushouts in the category of graphs, see e.g.,\ \cite{Ehrig06}.}} (1) and (2) as shown in the figure where $g\models\ac$ (for the definition \&
\end{minipage}\end{tabular}
\hspace{0.5cm}
\begin{tabular}{c}\hspace*{2mm}\begin{minipage}[h]{3.4cm}
\begin{tikzpicture}[node distance=2.5em,shape=rectangle,outer sep=1pt,inner sep=2pt,label distance=-1.25em]
\node(L){$L$};
\node(K)[strictly right of=L]{$K$};
\node(R)[strictly right of=K]{$R$};
\node(D)[strictly below of=K]{$D$};
\node(G)[strictly left of=D]{$G$};
\node(H)[strictly right of=D]{$H$};
%horizontal morphisms
\draw[altmonomorphism] (K) -- node[overlay,above]{\small $$} (L);
\draw[monomorphism] (K) -- node[overlay,above]{\small $$} (R);
\draw[altmonomorphism] (D) -- (G);
\draw[monomorphism] (D) -- (H);
%vertical morphisms 
\draw[monomorphism] (L) -- node[overlay,left](g){\small $g$} (G);
\draw[monomorphism] (K) -- node[overlay,left]{\small $d$} (D);
\draw[monomorphism] (R) -- node[overlay,right]{\small $h$}(H);
\draw[draw=none] (L) -- node[overlay]{\small (1)} (D);
\draw[draw=none] (R) -- node[overlay]{\small (2)} (D);
\node(acL)[outer sep=0pt,inner sep=0pt,node distance=0em,strictly left of=L]{
  \tikz[baseline,draw=black,fill=lightgray]{\filldraw (0,0) -- node[left,pos=0.9,overlay,outer sep=1em](acL){\small $\ac$} 
  (-0.6,0.12) -- (-0.6,-0.12) -- (0,0);}};
\draw[draw=none] (g) -- node[overlay,sloped](tr1){$\mathrel{=}\joinrel\mathrel{|}$} (acL);
 \end{tikzpicture}
\end{minipage}\end{tabular}

existence of pushouts in the category of graphs, see, e.g.,\ \cite{Ehrig06}). By $G\dder H$ we denote (the existence of) a direct transformation from $G$ to $H$. A rule $r$ is \emph{applicable} to a graph $G$ if there is a transformation from $G$ to a graph $H$ via $r$. \\ A \emph{transformation sequence} is a sequence $\tuple{G=G_0 \dder \ldots (\dder G_n =H)}$ of direct transformations. We call $G_0$ the \emph{starting graph} and $n$ the \emph{length}. For $n \ge 0$, we write $G \dder^* H$ or $G \dder^n H$. For $n\ge 1$, we denote this also by $G \dder^+ H$. (By $n=0$, we mean $G\cong H$.) If there is an $n \le k$ s.t. $G \dder^n H$, we write also $G \dder^{\le k} H$. \\Direct transformations and transformation sequences may be indicated with the applied rule or the rule set, respectively, to which the applied rules belong.
\end{definition}}
\begin{definition}[rules \& transformations]
A~\emph{(graph transformation) rule} ~$r =~\tuple{p,\ac}$ (over $\Lambda$) consists  of a \emph{plain rule} ~$p=\tuple{L \hookleftarrow K \injto R}$ with injective morphisms  $K\injto L$ and $K\injto R$ and an application condition $\ac$ over $L$. A rule $\tuple{p,\ctrue}$ is abbreviated by $p$ (where $L$, $K$, and $R$ are graphs over $\Lambda$). A \emph{(direct) transformation}\/ from a graph $G$ to a graph $H$ applying rule $\prule$ at an injective morphism $g$ consists of two \emph{pushouts}\ignore{\footnote{For definition \& existence of pushouts in the category of graphs, see e.g.,\ \cite{Ehrig06}.}} (1) and (2) as shown in the figure where $g\models\ac$ (for the definition \& existence of pushouts in the category of graphs, see, e.g.,\ \cite{Ehrig06}). By $G\dder H$ we denote (the existence of) a direct transformation from $G$ to $H$. A rule $r$ is \emph{applicable} to a graph $G$ if there is a transformation from $G$ to a graph $H$ via $r$. \vspace*{3mm}\begin{center}\hspace*{-10mm}\begin{tikzpicture}[node distance=2.5em,shape=rectangle,outer sep=1pt,inner sep=2pt,label distance=-1.25em]
\node(L){$L$};
\node(K)[strictly right of=L]{$K$};
\node(R)[strictly right of=K]{$R$};
\node(D)[strictly below of=K]{$D$};
\node(G)[strictly left of=D]{$G$};
\node(H)[strictly right of=D]{$H$};
%horizontal morphisms
\draw[altmonomorphism] (K) -- node[overlay,above]{\small $$} (L);
\draw[monomorphism] (K) -- node[overlay,above]{\small $$} (R);
\draw[altmonomorphism] (D) -- (G);
\draw[monomorphism] (D) -- (H);
%vertical morphisms 
\draw[monomorphism] (L) -- node[overlay,left](g){\small $g$} (G);
\draw[monomorphism] (K) -- node[overlay,left]{\small $d$} (D);
\draw[monomorphism] (R) -- node[overlay,right]{\small $h$}(H);
\draw[draw=none] (L) -- node[overlay]{\small (1)} (D);
\draw[draw=none] (R) -- node[overlay]{\small (2)} (D);
\node(acL)[outer sep=0pt,inner sep=0pt,node distance=0em,strictly left of=L]{
  \tikz[baseline,draw=black,fill=lightgray]{\filldraw (0,0) -- node[left,pos=0.9,overlay,outer sep=1em](acL){\small $\ac$} 
  (-0.6,0.12) -- (-0.6,-0.12) -- (0,0);}};
\draw[draw=none] (g) -- node[overlay,sloped](tr1){$\mathrel{=}\joinrel\mathrel{|}$} (acL);
 \end{tikzpicture}\end{center}
\vspace*{3mm}A \emph{transformation sequence} (of \emph{length} $n$) is a sequence $\tuple{G=G_0 \dder \ldots (\dder G_n =H)}$ of direct transformations. For $n \ge 0$, we write $G \dder^* H$ or $G \dder^n H$. For $n\ge 1$, we denote this also by $G \dder^+ H$. (By $n=0$, we mean $G\cong H$.) If there is an $n \le k$ s.t. $G \dder^n H$, we write also $G \dder^{\le k} H$. \\Direct transformations and transformation sequences may be indicated with the applied rule or the rule set, respectively, to which the applied rules belong.
\end{definition}

\begin{notation}A rule $\brule{L}{K}{R}$ sometimes is denoted by $\tuple{ L\dder R }$ where indices in $L$ and $R$ refer to the corresponding nodes. \end{notation}
We consider the simplest, unregulated type of system, i.e.,\ a finite set of graph transformation rules.
\begin{defi}[graph transformation system] A \emph{graph transformation system (GTS)} (over $\Lambda$) is a finite set of graph transformation rules (over $\Lambda$). \end{defi}
Completing finite sequences to infinite sequences is one way to handle the semantics of temporal graph constraints since temporal formulas are usually defined on infinite sequences. The following proposition provides the construction of the \emph{completion} of a GTS $\R$ \textcolor{new}{where finite sequences are extended to infinite sequences}. \shortv{It follows from\napx{ Theorem 1 and Fact 2 of} \cite{Pennemann06}. \apx{For details, see Appendix \ref{app}.}}
\begin{proposition}[applicability] \label{applic}For every graph transformation system $\R$ and every plain rule $p=\tuple{L \dder R}$, there is an application condition $\app(\R,p)$ s.t.\ for every transformation $G \dder_p H'$ with $g: L \injto G$, $ g \models \app(\R,p)\hspace*{2mm}\text{iff}\hspace*{2mm} \exists H: G \dder_{\R} H$. Consequently, the rule $\tuple{p, \neg\app(\R,p)}$ is applicable only if no rule from $\R$ is applicable.   \end{proposition}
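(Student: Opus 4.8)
The plan is to reduce global $\R$-applicability at a graph $G$ to a single left application condition over $L$ in three moves: first express the applicability of \emph{one} rule as a constraint, then take the finite disjunction over $\R$, and finally transport the resulting constraint along the unique morphism $i_L\colon \emptyset \injto L$ into a condition over $L$. The two building blocks I rely on---turning rule applicability into a constraint, and shifting a condition along a morphism---are exactly the constructions provided by Theorem~1 and Fact~2 of \cite{Pennemann06}, so the work is in assembling them correctly and checking the semantics.

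First I would fix, for each rule $r = \tuple{q_r,\ac_r}\in\R$ with plain rule $q_r = \tuple{L_r \hookleftarrow K_r \injto R_r}$, a constraint $\App(r)$ over $\emptyset$ such that $G \models \App(r)$ iff $r$ is applicable to $G$. Concretely, one combines the \emph{dangling condition} (which, for injective matches in the double-pushout setting, characterizes existence of the pushout complement $D$) together with $\ac_r$ into one application condition $\mathrm{app}_r$ over $L_r$; then $r$ is applicable to $G$ exactly when there is an injective match $m\colon L_r \injto G$ with $m \models \mathrm{app}_r$, i.e.\ exactly when $G \models \exists(i_{L_r}, \mathrm{app}_r)$ for $i_{L_r}\colon \emptyset \injto L_r$. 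So I set $\App(r) := \exists(i_{L_r}, \mathrm{app}_r)$, which is a genuine constraint (a condition over $\emptyset$). Since $\R$ is finite, $\App(\R) := \bigvee_{r\in\R}\App(r)$ is a well-defined constraint with $G \models \App(\R)$ iff some rule of $\R$ is applicable to $G$, that is, iff $\exists H\colon G \dder_{\R} H$.

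Next I would apply the shift construction along $i_L\colon \emptyset \injto L$ to obtain $\app(\R,p) := \mathrm{Shift}(i_L, \App(\R))$, a condition over $L$ whose defining property is that $g \models \app(\R,p)$ iff $g \circ i_L \models \App(\R)$ for every injective $g\colon L \injto G$. Because $g \circ i_L$ is the unique morphism $\emptyset \injto G$, the right-hand side is just $G \models \App(\R)$, which holds iff some rule of $\R$ is applicable to $G$. This gives precisely the claimed equivalence $g \models \app(\R,p) \iff \exists H\colon G \dder_{\R} H$ for every transformation $G \dder_p H'$ with match $g$. The ``consequently'' part is then immediate: if $\tuple{p,\neg\app(\R,p)}$ is applicable to $G$ via some $g$, then $g \models \neg\app(\R,p)$, hence $g \not\models \app(\R,p)$, so by the equivalence no $H$ with $G \dder_{\R} H$ exists, i.e.\ no rule of $\R$ is applicable to $G$.

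The routine parts are the finite disjunction and the formal verification of the shift equivalence, both standard for nested conditions. The one step needing genuine care---and the place where \cite{Pennemann06} does the real work---is the first: encoding existence of the pushout complement as a finite application condition over each $L_r$, i.e.\ rendering the dangling condition in nested-condition form. The conceptual point that makes the whole argument go through is that shifting along $\emptyset \injto L$ collapses all dependence on the particular match $g$, so that satisfaction of $\app(\R,p)$ by any match reduces to the match-independent property $G \models \App(\R)$---which is exactly what ``$G$ admits some $\R$-step'' requires.
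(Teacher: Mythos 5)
Your argument is correct and establishes exactly the equivalence claimed, but it takes a different route from the paper in both of its two steps. For the applicability constraint, the paper simply takes $\neg\mathrm{wlp}(r,\texttt{false})$ and invokes Fact~2 of \cite{Pennemann06} to get ``$G\models\neg\mathrm{wlp}(r,\texttt{false})$ iff $r$ is applicable to $G$'', whereas you rebuild this constraint by hand as $\exists(i_{L_r},\mathrm{app}_r)$ from the dangling condition and $\ac_r$; the content is the same, and you correctly flag the rendering of the dangling condition in nested form as the step where the cited machinery does the real work. For the transport to a condition over $L$, the paper sets $\app(\R,p):=\mathrm{A}(p^{-1},\vee_{r\in\R}\neg\mathrm{wlp}(r,\texttt{false}))$ and applies Theorem~1 of \cite{Pennemann06} over the \emph{inverse rule} $p^{-1}$, reading $g$ as the comatch of $H'\dder_{p^{-1}}G$ --- which is precisely why the hypothesis ``for every transformation $G\dder_p H'$'' appears in the statement. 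You instead shift the constraint along $\emptyset\injto L$; since $g\circ i_L$ is the unique morphism $\emptyset\injto G$, your $\app(\R,p)$ satisfies the equivalence for \emph{every} injective $g\colon L\injto G$, not only those at which $p$ is applicable, which is slightly stronger and arguably cleaner. The only thing to adjust is attribution: the shift of a condition along a morphism is a different construction from the rule-based $\mathrm{A}$-transformation of Theorem~1 of \cite{Pennemann06}, so you should cite the shift lemma separately (and Fact~2 for the applicability constraint, which would spare you the explicit dangling-condition encoding) rather than folding both under ``Theorem~1 and Fact~2''.
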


\ignore{%oldvers
 \longv{\begin{proof}For the plain rule $p=\brule{L}{K}{R} $, let $p^{-1}=$$ ~\brule{R}{K}{L}$ be the \emph{inverse rule}. Let \begin{displaymath} \app(\R, p):= \text{A}(p^{-1}, \vee_{r \in \R} \neg\text{wlp}(r, \texttt{false})).  \end{displaymath} 
\begin{tabular}{c}\begin{minipage}[h]{7.2cm}
Consider the transformation  $H' \dder_{p^{-1}} G$.
\end{minipage}\end{tabular}
\hspace{0.5cm}
\begin{tabular}{c}\begin{minipage}[h]{3.4cm}
\begin{tikzpicture}[node distance=2.5em,shape=rectangle,outer sep=1pt,inner sep=2pt,label distance=-1.25em]
\node(L){$L$};
\node(K)[strictly right of=L]{$K$};
\node(R)[strictly right of=K]{$R$};
\node(D)[strictly below of=K]{$D$};
\node(G)[strictly left of=D]{$G$};
\node(H)[strictly right of=D]{$H'$};
%horizontal morphisms
\draw[altmonomorphism] (K) -- node[overlay,above]{\small $$} (L);
\draw[monomorphism] (K) -- node[overlay,above]{\small $$} (R);
\draw[altmonomorphism] (D) -- (G);
\draw[monomorphism] (D) -- (H);
%vertical morphisms 
\draw[monomorphism] (L) -- node[overlay,left](g){\small $g$} (G);
\draw[monomorphism] (K) -- node[overlay,left]{\small $d$} (D);
\draw[monomorphism] (R) -- node[overlay,right]{\small $h'$}(H);
\draw[draw=none] (L) -- node[overlay]{} (D);
\draw[draw=none] (R) -- node[overlay]{} (D);
\node(acL)[outer sep=0pt,inner sep=0pt,node distance=0em,strictly left of=L]{
  \tikz[baseline,draw=black,fill=lightgray]{\filldraw (0,0) -- node[left,pos=0.9,overlay,outer sep=1em](acL){\scriptsize $\text{app}(\R,p)$} 
  (-0.6,0.12) -- (-0.6,-0.12) -- (0,0);}};
\draw[draw=none] (g) -- node[overlay,sloped](tr1){$\mathrel{=}\joinrel\mathrel{|}$} (acL);
 \end{tikzpicture}
\end{minipage}\end{tabular}
 \begin{align*}g \models  \app(\R, p)& & \textnormal{iff} &\\g \models \text{A}(p^{-1}, \vee_{r \in \R} \neg\text{wlp}(r, \texttt{false})) & & \textnormal{iff}& \hspace*{2mm}\text{(Theorem 1 of \cite{Pennemann06})}    \\ G \models \vee_{r \in \R} \neg\text{wlp}(r, \texttt{false})& & \textnormal{iff}& \hspace*{2mm}\text{(Fact 2 of \cite{Pennemann06})} \\ \exists r \in \R: \exists H: G \dder_{r} H &&& \end{align*}
If $\tuple{p, \neg\text{app}(\R,p)}$ is applicable to a graph $G$, there is a morphism $g: L \injto G$ s.t. $g \models \neg\text{app}(\R,p)$, i.e. there is no graph $H$ s.t. $G \dder_{\R}H$. \end{proof}} }

\begin{proof} For the plain rule $p=\brule{L}{K}{R} $, let $p^{-1}=$$ ~\brule{R}{K}{L}$ be the \emph{inverse rule} of $p$. Let  \begin{displaymath} \app(\R, p):= \text{A}(p^{-1}, \vee_{r \in \R} \neg\text{wlp}(r, \texttt{false})).  \end{displaymath} 
Consider the transformation  $H' \dder_{p^{-1}} G$: 
\begin{center}\begin{tikzpicture}[node distance=2.5em,shape=rectangle,outer sep=1pt,inner sep=2pt,label distance=-1.25em]
\node(L){$L$};
\node(K)[strictly right of=L]{$K$};
\node(R)[strictly right of=K]{$R$};
\node(D)[strictly below of=K]{$D$};
\node(G)[strictly left of=D]{$G$};
\node(H)[strictly right of=D]{$H'$};
%horizontal morphisms
\draw[altmonomorphism] (K) -- node[overlay,above]{\small $$} (L);
\draw[monomorphism] (K) -- node[overlay,above]{\small $$} (R);
\draw[altmonomorphism] (D) -- (G);
\draw[monomorphism] (D) -- (H);
%vertical morphisms 
\draw[monomorphism] (L) -- node[overlay,left](g){\small $g$} (G);
\draw[monomorphism] (K) -- node[overlay,left]{\small $d$} (D);
\draw[monomorphism] (R) -- node[overlay,right]{\small $h'$}(H);
\draw[draw=none] (L) -- node[overlay]{} (D);
\draw[draw=none] (R) -- node[overlay]{} (D);
\node(acL)[outer sep=0pt,inner sep=0pt,node distance=0em,strictly left of=L]{
  \tikz[baseline,draw=black,fill=lightgray]{\filldraw (0,0) -- node[left,pos=0.9,overlay,outer sep=1em](acL){\scriptsize $\text{app}(\R,p)$} 
  (-0.6,0.12) -- (-0.6,-0.12) -- (0,0);}};
\draw[draw=none] (g) -- node[overlay,sloped](tr1){$\mathrel{=}\joinrel\mathrel{|}$} (acL);
 \end{tikzpicture}\end{center}
 \begin{align*}g \models  \app(\R, p)& & \textnormal{iff} &\\g \models \text{A}(p^{-1}, \vee_{r \in \R} \neg\text{wlp}(r, \texttt{false})) & & \textnormal{iff}& \hspace*{2mm}\text{(Theorem 1 of \cite{Pennemann06})}    \\ G \models \vee_{r \in \R} \neg\text{wlp}(r, \texttt{false})& & \textnormal{iff}& \hspace*{2mm}\text{(Fact 2 of \cite{Pennemann06})} \\ \exists r \in \R: \exists H: G \dder_{r} H &&& \end{align*}
If $\tuple{p, \neg\text{app}(\R,p)}$ is applicable to a graph $G$, there is a morphism $g: L \injto G$ s.t.\ $g \models \neg\text{app}(\R,p)$, i.e.,\ there is no graph $H$ s.t. $G \dder_{\R}H$. \end{proof}

\begin{bem} The rule $\tuple{p, \neg\text{app}(\R,p)}$ is not applicable to a graph if $p \in \R$. We consider the special case $p=\texttt{Skip} \notin \R$. In this case, $ \tuple{\texttt{Skip}, \neg\text{app}(\R, \texttt{Skip})} $ is applicable iff no rule from $\R$ is applicable.
\end{bem}

\begin{defi}[completion] For a graph transformation system $\R$, the \emph{completion} $\overline{\R}$ of $\R$ is given by $\overline{\R} := \R \cup \{ \tuple{\texttt{Skip}, \neg\text{app}(\R, \texttt{Skip})} \}$. \end{defi}
\textcolor{new}{The completion of a GTS yields only infinite transformation sequences by  applying the rule $\texttt{Skip}:=\tuple{\varnothing \dder \varnothing}$ when no rule from $\R$ is applicable.}
\subsection{Temporal Graph Constraints}

Temporal formulas such as LTL \longv{and CTL} formulas are well-known in logics, see, e.g.,\ \cite{\longv{Clarke82,}Emerson90, Baier08}. We adapt the notions and consider so-called \emph{LTL}\longv{/\emph{CTL}} \emph{graph constraints} \cite{Peuser18}, i.e.,\ \shortv{LTL}\longv{temporal} formulas whose atoms equate to graph constraints. \longv{ \par We first define LTL graph constraints and their semantics.} The temporality is interpreted as the changes along a transformation sequence. Every direct transformation correlates to a time step. Besides the common propositional operators there are \emph{temporal operators}, e.g.,\ the operator $\opX$ (\emph{Ne\textbf{X}t}) describes the validity of a formula in the next step while $\opG$ (\emph{\textbf{G}lobally}) describes the validity of a formula in every following step. The operator $\opU$ (\emph{\textbf{U}ntil}) describes the validity of a first formula until a second formula is valid.

\begin{defi}[LTL graph constraints \& semantics] The class $\LTL$ of \textit{linear temporal logic \textcolor{new}{(graph)} constraints} is defined inductively: (i) Every graph constraint is in $\LTL$, (ii) for all $\phi , \psi \in \LTL$, $\phi \land \psi , $ \linebreak $\phi \lor \psi, \neg \phi, \phi \Rightarrow \psi \in \LTL$ (\emph{propositional operators}), (iii) for all $\phi , \psi \in \LTL$, $\opX \phi, \opG \phi, \phi \opU \psi,  \phi \opW \psi \in \LTL$ (\emph{temporal operators}).\\
The semantics of LTL constraints is defined for infinite transformation sequences: Let $S = \tuple{G_0 \dder  \ldots }$ be an infinite transformation sequence. The \textit{satisfaction} of LTL constraints in $S$, denoted by $\models$, is defined inductively: \\ (i) For a graph constraint $c$, $G_i \models c$ if $G_i$ satisfies $c$ as graph constraint. \\ (ii) The semantics for the propositional operators are as usual, e.g.,\ for $\phi \in \LTL$, $G_i \models_{\LTLt} \neg \phi$ if $G_i \not\models \phi$. \\ (iii) For all $\phi,\psi \in \LTL$, \\ \hspace*{5mm} (a) $G_i \models \opX \phi $ if $G_{i+1} \models_{\LTLt} \phi$, \ignore{\hspace*{25.3mm}}\\ \hspace*{4.95mm} (b) $G_i \models_{\LTLt} \opG \phi $ if $G_k \models_{\LTLt} \phi $ for all $k \ge i$, \\ \hspace*{5mm}  (c) $G_i \models \phi \opU \psi $ if there is $l \ge i$ s.t. $G_l \models_{\LTLt} \psi$ and $G_k \models_{\LTLt} \phi$ for all $i \le k < l$, \\ \hspace*{4.95mm} (d) $G_i \models \phi \opW \psi$ if $G_i \models \phi \opU \psi \lor \opG \phi$. \\
For an LTL constraint $\phi$, $S$ \emph{satisfies} $\phi$, in symbols $S \models \phi$, if $G_0 \models \phi$. \\ A GTS $\R$ \emph{satisfies} $\phi$, in symbols $\R \models \phi$, if $S \models \phi $ for all infinite sequences $S$ in the completion $\overline{\R}$. \ignore{, and  $F \models \phi $ (respectively $\ext (F) \models \st ( \phi )$) for all terminating sequences $F$ in $\R$.} \end{defi}

\longv{
\begin{bem} It is not necessary to distinguish between the propositional operators of LTL and the propositional operators in the graph constraints since they are ``equivalent'', e.g.,\ $G \models a \land b $ \textcolor{new}{iff $G \models a \land_{\text{LTL}} b $ for every graph $G$ and all graph constraints $a,b$ where $\land_{\text{LTL}}$ is the conjunction of LTL}. \end{bem}}
\longv{CTL graph constraints are, like LTL graph constraints, temporal formulas where the atoms equate to the graph constraints. By contrast, the temporality is here branching, i.e.,\ we consider changes along a transformation \emph{tree}. Besides the common propositional operators, there are \emph{path-quantified temporal operators} which are pairs of operators: the first one is either $\opA$ (\emph{for} \emph{\textbf{A}ll following paths}) or $\opE$ (\emph{there} \emph{\textbf{E}xists a following path}), the second one is a temporal operator. The operator $\opAG$ means the valdity of a formula in all following sequences, and $\opEF$ means that a graph can be reached where the formula is valid.

\begin{defi}[CTL graph constraints] The class $\CTL$ of \textit{computation tree logic \textcolor{new}{(graph)} constraints} is defined inductively by: (i) Every graph constraint is an element of $\CTL$. (ii) For every $\phi , \psi \in \LTL$: (a) $\phi \land \psi ,\phi \lor \psi, \neg \phi, \phi \Rightarrow \psi \in \CTL$ (\emph{propositional operators}), (b) $\opAX \phi, \opEX \phi, \opAG \phi,\opEG \phi, \phi \opAU \psi,\phi \opEU \psi,$ \linebreak $\phi \opAW \psi, \phi\opEW \psi \in \CTL$ (\emph{path-quantified temporal operators}).  \end{defi}
The semantics of CTL constraints is defined on a transformation tree which is constructed as the \emph{unfolding} of a GTS with a distinguished root graph. We give a formal definition of transformation trees.

\ignore{%oldv
\begin{defi}[transformation tree] Let $R$ be a GTS and $G$ be a graph. A \emph{transformation tree} $T(G)$ is a tree $\tuple{G, \dder_\R}$ with the \emph{root} $G$. The \emph{predecessor function} $\pred : \mathcal{L}(G) \to \mathcal{L}(G)$ assigns every graph $H$ in the \emph{language}\footnote{The language of a graph $G$ is the set of all graphs which are reachable from $G$ via $\dder_\R$, see \cite{Ehrig06}.} of $G$, a graph from which we can directly reach $H$. The \emph{successor function} $\pred : \mathcal{L}(G) \to \Pot (\mathcal{L}(G))$ assigns every graph $H \in \mathcal{L}(G)$ the set $\suc (H)= \{ M : H \dder_\R  M \}$ of all graphs which can be reached directly from $H$. A \emph{path} of $T(G)$ is a sequence in $\dder_\R$, which is also a subtree of $T(G)$.  \end{defi}}

\begin{defi}[transformation tree]  Let $\R$ be a GTS and $G$ be a graph. The \emph{transformation tree} $T(G)$ of $G$ w.r.t. $\R$ is a tuple $\tuple{ \FinSeq, \subset_\R}$ where $\FinSeq $ is the set of finite transformation sequences in $\R$ starting from $G$ and $\subset_\R$ is a relation on $\FinSeq$ s.t.\ $F \subset_\R F'$ iff $F$ is a subsequence of $F'$ and their lengths differ by $1$. A \emph{path} in $T(G)$ is a sequence of consecutively $\subset_\R$-related transformation sequences in $\FinSeq$. \ignore{The sequence $\tuple{G}$, respectively, $G$ is called the $root$ of $T(G)$.}\end{defi}
By identifying a transformation sequence $\tuple{G \dder_\R \ldots \dder_\R H} \in \FinSeq$ with the graph $H$ and $\subset_\R$ with $\dder_\R$, we obtain the following characterization of paths in the transformation tree: A path in $T(G)$ is a transformation sequence in $\R$ starting from a graph $H \in  \mathcal{L}(G)$ where $\mathcal{L}(G)$ is the \emph{language} of $G$, i.e.,\ the set of all graphs which are reachable from $G$ via $\dder_\R$, see, e.g.,\ \cite{Ehrig06}.

\begin{defi}[semantics of CTL] Let $T(G)$ be a transformation tree. The \textit{satisfaction} of CTL constraints in $T(G)$ is defined inductively: \\ (i) For a graph constraint $c$ and $H \in \mathcal{L}(G)$,  $H \models c$ if $H$ satisfies $c$ as graph constraint. \\(ii) The semantics for the propositional operators are as usual, e.g.,\ for $\phi \in \CTL$ and $H \in \mathcal{L}(G)$,  $H \models \neg \phi$ if $H \not\models \phi$. \\ (iii) For $\phi, \psi \in \CTL$, $H \in \mathcal{L}(G)$ and $\Op \in \{ \opX, \opG, \opU,\opW  \}$, \\ \hspace*{5mm} (a) $H \models \opA \Op \phi $ ($H \models \phi \opA \Op \psi$) if for all infinite paths $S$ of $T(G)$, which start in $H$, $S$ satisfies $\Op\phi$ \hspace*{11mm} ($\phi \opA \Op \psi$) in the LTL-sense, e.g.,\ $H \models \opAG\phi$ means: for every infinite path $\tuple{H=H_0  \dder_\R \ldots }$, \hspace*{11mm} for every $i \ge 0$, $H_i \models \phi$. \\\hspace*{6mm} (b) $H \models \opE \Op \phi $ ($H \models \phi \opE \Op \psi$) if there is an infinite path of $T(G)$, which starts in $H$ and satisfies $\Op\phi$ \hspace*{11mm} ($\phi \opE \Op \psi$) in the LTL-sense, e.g.,\ $H \models \opEX \phi$ means: there is an infinite path $\tuple{H=H_0 \dder_\R \ldots }$ \hspace*{11mm} s.t.\ $H_1 \models \phi$.\\
For a CTL constraint $\phi$, $T(G)$ \emph{satisfies} $\phi$, in symbols $T(G) \models \phi$, if $G \models \phi$.\\ A GTS $\R$ \emph{satisfies} $\phi$, in symbols $\R \models \phi$ if $T(G) \models \phi$ for all transformation trees $T(G)$ w.r.t.\ $\overline{\R}$ (\textcolor{new}{i.e.,\ for all graphs $G$}).\end{defi}}

\section{Joint Graph Transformation Systems}\label{joint}
In this section, we \textcolor{new}{define joint graph transformation systems, each of which} involves the system, the environment, and an automaton modeling the interaction between them. Both system and environment are GTSs. We illustrate the way of functioning of a joint graph transformation system by a very simple example, namely the TNS in Example \ref{ex}.

\begin{asm} In the following, let $\Lambda$ be a fixed label alphabet, and $\Sys$ and $\Env$ be GTSs over $\Lambda$. W.l.o.g., we assume that $\Sys$ and $\Env$ are disjoint. (If $\Sys$ and $\Env$ share a common rule $r$, we assign $r$ different names in $\Sys$ and $\Env$.) \end{asm}
We specify the class of automata which are used to regulate the interaction between system and environment. These regulation automata are similar to $\omega$-automata, see, e.g.,\ \cite{Thomas90}.

\begin{defi}[regulation automaton] A \emph{regulation automaton} of $\tuple{\Sys, \Env}$ is a tuple $A=\tuple{Q, q_0, \delta, \select}$ consisting of a finite set $Q$ disjoint from $\Lambda$, called the \emph{state set}, a  \emph{starting state} $q_0 \in Q$, a \emph{transition relation} $\delta \subseteq Q \times Q$, \ignore{whose transitive closure is connected\footnote{A relation $R \subseteq X \times X$ is \emph{connected} iff for all $x,y \in X, x \neq y$: $\tuple{x,y} \in R$ or $\tuple{y,x} \in R$. }} and a function $\select: \delta \to \Pot ( \Sys \cup \Env ) $ (into the power set of $\Sys \cup \Env$), called the \emph{selection function}.  A regulation automaton is \emph{proper} if no element of the transition relation $\delta$ is assigned to the empty set and \textcolor{new}{$\delta^+(q_0 )\supseteq Q\setminus \{q_0\}$, i.e., the image of $\{q_0\}$ under the transitive closure $\delta^+$ contains $Q \setminus \{q_0 \}$}.
\end{defi}
\begin{bem} Alternatively, the interaction between system and environment can be specified by an $\omega$-regular language \cite{Thomas90} over the alphabet $\Pot(\Sys \cup \Env)$ corresponding to the accepting regulation automaton. In order to obtain a GTS as joint system, we prefer to use regulation automata. \end{bem}
An example of a regulation automaton of $\tuple{\Sys, \Env}$ is given in Example \ref{ex} where the system $\Sys$ consists of the rules \texttt{Ascend}, \texttt{Descend}, \texttt{Move}, and \texttt{Repair} and the only environment rule is \texttt{Block}. In the following, we consider only \emph{proper} regulation automata. However, this leads not to a loss of generality (see Proposition \ref{proper}). \par
A joint graph transformation system is obtained by ``synchronizing'' the system, repectively, the environment, with the simulation of the regulation automaton, and then joining both sets of enriched rules. 

\begin{defi}[joint graph transformation system] Let $\Sys$ and $\Env$ be graph transformation systems, called \textit{system} and \textit{environment}, respectively, and $A=\tuple{Q,q_0 , \delta, \select }$ a regulation automaton \textcolor{new}{of $\tuple{\Sys,\Env}$}. The \textit{joint (graph transformation) system} of $\Sys$ and $\Env$ w.r.t. $A$ is the graph transformation system $\Sys_A \cup  \Env_A$, shortly $\Sys\Env$, where for a rule set $\R \in \{\Sys, \Env\}$, $\R_A$ denotes the \textit{enriched rule set} \begin{displaymath} \R_A = \{ \tuple{\tuple{L,q} \dder \tuple{R,q'},\ac} \vert \tuple{L \dder R, \ac} \in \R \cap \select \tuple{q,q'} \text{ and }\tuple{q,q'} \in \delta \} \end{displaymath} \textcolor{new}{where $\tuple{\tuple{L,q} \dder \tuple{R,q'}}$ abbreviates $\brule{\tuple{L,q}}{K}{\tuple{R,q'}}$ for $ \tuple{L \dder R}=\brule{L}{K}{R}$. For a graph $G$ and a state $q$, $\tuple{G,q}$ denotes the graph $G + \raisebox{0.3mm}{\circlearound{$q$}}$, i.e.,\ the disjoint union of $G$ and a node labeled with $q$.}
\end{defi}
\begin{bem} On the one hand, every joint graph transformation system is a graph transformation system over the label set $\Lambda \cup Q$ where $Q$ is the state set of the regulation automaton. On the other hand, for every graph transformation system $\Sys$, one can construct an ``equivalent'' joint graph transformation system using the environment $\Env= \varnothing$ and the regulation automaton given by: \vspace*{3mm} \begin{center}\scalebox{0.9}{\begin{tikzpicture}[-> ,>= stealth' , node distance=3cm]

 \node[state,fill=whitegray] (R){$q_0$};

\node (start) [left = 1cm of R] {};

\path (start) edge node{} (R)
(R) edge[loop above] node{$\mathcal{S}$} (R) ;
 \end{tikzpicture}}\end{center}  \begin{center} Fig. 4: representation of the considered regulation automaton of $\tuple{\Sys, \varnothing}$\end{center} \end{bem}

\textcolor{new}{ \begin{convention} Joint graph transformation systems are applied to graphs of the form $\tuple{G, q_0} $ where $G$ is a graph over $\Lambda$ and $q_0$ is the starting state of the regulation automaton. \end{convention} The following proposition shows that in our construction of joint graph transformation systems, system rules, and environment rules, respectively, are synchronized with the corresponding transitions of the regulation automaton.}

\begin{propos}[synchronization] \label{para} For a rule $r=\tuple{ L \dder R,\ac} $ of the GTS $\Sys \cup \Env$, let $r_A =\tuple{ \tuple{L,q} \dder \tuple{R,q'}, \ac}  $ be a corresponding enriched rule in the joint system $\Sys \Env $ w.r.t.\ a regulation automaton  $A=\tuple{Q,q_0 , \delta, \select }$. There is a transformation sequence $\tuple{G_0 ,q_0} \dder_{r_{1,A}} \ldots (\dder_{r_{n,A}} \tuple{G_n, q_n})$ in $\Sys \Env$ iff there is a transformation sequence $G_0 \dder_{r_1} \ldots (\dder_{r_n} G_n)$ in $\Sys\cup\Env$ s.t. $\tuple{q_{i-1},q_i} \in \delta$ and $r_i \in \select \tuple{q_{i-1}, q_i}$ for $i \ge 1$. \end{propos}

\begin{proof} By definition of the enriched rule set, for all graphs $G,G'$ over $\Lambda$, $ \tuple{G ,q} \dder_{r_{A}} \tuple{G',q'} \textnormal{  iff  } G \dder_r G' \textnormal{ s.t. } \tuple{q,q'} \in \delta \textnormal{ and } r \in \select \tuple{q,q'}. $ \end{proof}
For every regulation automaton, there is an ``equivalent'' proper one.
\begin{propos}[proper regulation automata]\label{proper} For all graph transformation systems $\Sys$, $\Env$ and every regulation automaton $A$ of $\tuple{\Sys, \Env}$, there is a proper regulation automaton $A'$ s.t.\ every transformation sequence in $\Sys_A \cup \Env_A$ starting from a graph $\tuple{G_0 ,q_0}$ is also a transformation sequence in $\Sys_{A'} \cup \Env_{A'}$, and vice versa. \end{propos}

\begin{proof} \longv{For $A=\tuple{Q,q_0, \delta, \select}$, let $A'=\tuple{Q',q_0, \delta', \select}$ where $Q' \subseteq Q$ is the set of states, which can be reached from $q_0$ without crossing transitions $\tuple{q,q'}$ with $\select\tuple{q,q'}=\varnothing$, and $\delta' = \delta \cap Q' \times Q'$. The inclusion $\Sys_{A'} \cup \Env_{A'} \subseteq \Sys_{A} \cup \Env_{A}$ yields one direction of the statement. Conversely, consider the transformation sequence $\tuple{G_0 ,q_0} \dder_{r_1} \ldots (\dder_{r_n} \tuple{G_n, q_n}) $ in $\Sys_A \cup \Env_A$. Assume that there is $i \ge 1 $ s.t. $r_i = \tuple{\tuple{L, q_{i-1}}  \dder \tuple{ R, q_i} , \ac} \not \in \Sys_{A'} \cup \Env_{A'}$. By construction of $A'$, $q_i$ cannot be reached from $q_0$ without crossing a transition $\tuple{q,q'}$ with $\select\tuple{q,q'}=\varnothing$. This is a contradiction to Proposition \ref{para}.}\napx{\shortv{For $A=\tuple{Q,q_0, \delta, \select}$, let $A'=\tuple{Q',q_0, \delta', \select}$ where $Q' \subseteq Q$ is the set of states, which can be reached from $q_0$ without crossing transitions $\tuple{q,q'}$ with $\select\tuple{q,q'}=\varnothing$, and $\delta' = \delta \cap Q' \times Q'$.}}\apx{We can construct $A'$ explicitly. See Appenix \ref{properA}.} \end{proof}

\begin{example}[traffic network system]
Consider the system $\Sys$, the environment $\Env$, and the regulation automaton $A$ given in Example \ref{ex}. The joint graph transformation system $\Sys_A \cup \Env_A$ is given by: \vspace*{3mm}
\begin{center}
$\Sys_A \left\{ \begin{array}{ll} \texttt{Ascend}_A &: \left\tuple{ \track \hspace*{2mm} \circlearound{$q_0$} \hspace*{2mm}\Rightarrow\hspace*{2mm} \onecar \hspace*{2mm}\circlearound{$q_0$} \right} \\ \texttt{Descend}_A &: \left\tuple{ \onecar  \hspace*{2mm} \circlearound{$q_0$} \hspace*{2mm} \Rightarrow \track  \hspace*{2mm} \circlearound{$q_0$} \right} \\ \texttt{Move}_A &: \left\tuple{ \moveL \hspace*{2mm}\circlearound{$q_0$} \hspace*{2mm}\Rightarrow \hspace*{2mm}\moveR \hspace*{2mm}\circlearound{$q_0$} \right}  \\ \texttt{Repair}_A &: \left\tuple{ \blocked \hspace*{2mm} \circlearound{$q_1$} \hspace*{2mm}\Rightarrow \track \hspace*{2mm} \circlearound{$q_0$} \right} \end{array} \right.$ \\ \vspace*{2mm}\end{center}
\hspace*{27mm}$\Env_A \left\{  \begin{array}{ll}\texttt{Block}_A&\hspace*{4mm}: \left\tuple{ \twocar \hspace*{2mm} \circlearound{$q_0$} \hspace*{2mm}\Rightarrow \blockedtwo \hspace*{2mm} \circlearound{$q_1$}  \right} \end{array} \right.$  \vspace*{3mm}\begin{center} Fig. 5: the TNS as joint graph transformation system \end{center}
\end{example}

\section{Correctness Notions}
\label{correct}
\shortv{
In this section, we recall a notion of correctness for GTSs and define two instances of correctness notions for joint systems, which generalize correctness for GTSs, namely \emph{k-step} and \emph{last-minute correctness}. }\longv{
In this section, we recall a notion of correctness for GTSs and define three instances of correctness notions for joint systems, which generalize correctness for GTS, namely \emph{k-step}, \emph{weak k-step}, and \emph{last-minute correctness}. } \par
The definition of (partial) correctness of a GTS w.r.t.\ a \emph{precondition} and a \emph{postcondition} means that for every graph satisfying the precondition, every graph derived via the GTS satisfies the postcondition.
This is a stronger notion than the usual definition of correctness since we demand that the postcondition is valid in every following step and not only in the first step.
\begin{defi}[correctness] Let $\tuple{c,d}$ be a pair of graph constraints over $\Lambda$. A graph transformation system $\R$ is \emph{(partially) correct} w.r.t.\ $\tuple{c,d}$ if for all graphs $G,H$ over $\Lambda$ with $G \models c$ and $G \dder_{\R}^+ H$, $H \models d$ holds. We call $c$ the \textit{precondition} and $d$ the \textit{postcondition}.  \end{defi}
\begin{example}[system correctness of the TNS] It is clear that the TNS is not correct\ignore{\footnote{In the context of correctness, enriched rule sets and joint GTSs are treated as GTSs over $\Lambda$.}} w.r.t.\ $\tuple{c,c}$ where $c=\text{NoBlocked}$. However, the postcondition is violated only after applying $\texttt{Block}_A$. \ignore{If we consider only the enriched system rules, we obtain correctness.}The enriched system $\Sys_A$ of the traffic network is correct w.r.t.\ $\tuple{c,c}$ since \texttt{Block} is not a system rule. \end{example}

\begin{convention} In the context of correctness, enriched rule sets and joint GTSs are treated as GTSs over $\Lambda$. \end{convention}
For joint systems, this correctness notion is too restrictive. We allow that the validity of the postcondition is violated for a short period after an interference of the environment. Nonetheless, the postcondition must be recovered. We give two examples of correctness notions. \par One possibility is to limit the maximal number of steps, after which the postcondition must be recovered. 

\begin{asm} In the following, let $\SysEnv$ be a joint system w.r.t.\ a regulation automaton $A=\tuple{Q,q_0, \delta, \select}$ and $\tuple{c,d}$ be a pair of graph constraints over $\Lambda$.  \end{asm}
The notion of $k$-step correctness demands that after an interference of the environment, recovery of the postcondition must occur in at most $k$ steps. 
\begin{defi}[$k$-step correctness] Let $k$ be a natural number. The joint system $\SysEnv$ is $k$-\emph{step correct} w.r.t. $\tuple{c,d}$ if \begin{itemize} \item[(S)] the enriched system $\Sys_A$ is correct w.r.t.\ $\tuple{c ,d}$ and \item[(R$^k$)] for every graph $G=\tuple{G',q_0}$ with $G \models c $ \ignore{\footnote{$G \models c$ iff $G' \models c$ since $c$ is a graph constraint over $\Lambda$ and $G'$ is a graph over $\Lambda$.}} and every transformation sequence $\tuple{G \dder_{\Sys\Env}^* H \dder_{\Env_A} M \dder_{\SysEnvC}^k  N  }$ there is a subsequence $\tuple{M \dder_{\SysEnvC}^{\le k} N'}$ with $N' \models d$ (\emph{k-step recovery}). \end{itemize} \end{defi}

\begin{example}[1-step correctness of the TNS]  Let $c=\text{NoBlocked}$. The traffic network system is neither correct w.r.t.\ $\tuple{c,c}$ nor 0-step correct but 1-step correct w.r.t.\ $\tuple{c,c}$ since after applying $\texttt{Block}_A$, there is a blocked track and immediately, by applying $\texttt{Repair}_A$, the blocked track vanishes. \end{example}
The notion of $k$-step correctness yields a hierarchy. By the following proposition, $k$-step correctness implies $(k+1)$-step correctness for $k \ge 1$. Hence, the TNS is also $k$-step correct w.r.t.\ $\tuple{c,c}$ for $k \ge 1$.

\begin{propos}[hierarchy of $k$-step correctness]\label{hier} Let $k$ be a natural number. \begin{itemize} \item[(i)]  If $\SysEnv$ is correct w.r.t.\ $\tuple{c,d}$, \ignore{\newline} then it is $0$-step correct w.r.t.\ $\tuple{c,d}$.  \item[(ii)]If $\SysEnv$ is $k$-step correct w.r.t.\ $\tuple{c,d}$, \ignore{\newline} then it is $(k+1)$-step correct w.r.t.\ $\tuple{c,d}$. \end{itemize} \end{propos}
\begin{proof} \ignore{(i) Let $G=\tuple{G',q_0}$ with $G'$ a graph over $\Lambda$ and $G \models c $. Consider a transformation sequence $\tuple{G \dder_{\Sys\Env}^* H \dder_{\Env} M}$. The correctness of $\SysEnv$ implies that $M \models d$. \\
(ii)} This follows by definition. \end{proof}\vspace*{4mm}
\begin{center}\scalebox{0.7}{\begin{tikzpicture} \draw[fill=whitegray] (0,0) circle [radius=3] node {};\draw[fill=lightergray] (0,0) circle [radius=2] node {}; \draw[fill=lightgray] (0,0) circle [radius=1] node {\textbf{correct}};  \draw (0,1.3) node {\textbf{$0$-step correct}}; \draw (0,2.3) node {\textbf{$1$-step correct}};\draw (0,3.5) node {\textbf{$\vdots$}}; \end{tikzpicture}} \end{center} \begin{center} Fig. 6: illustration of the hierarchy of $k$-step correctness \end{center}
\vspace*{4mm}We introduce another notion of correctness which demands not a recovery after $k$ steps but a recovery until the next enviroment step, i.e.,\ the postcondition must hold at the last point in time before the next environment rule is applied.  \newpage
\begin{defi}[last-minute correctness] A joint graph transformation system $\SysEnv$ is \emph{last-minute correct} w.r.t.\ $\tuple{c,d}$ if \begin{itemize} \item[(S)] the enriched system $\Sys_A$ is correct w.r.t. $\tuple{c ,d}$ and \item[(R)] for every transformation sequence $\tuple{G \Rightarrow_{\SysEnv}^* M \Rightarrow_{\Env_A} N \Rightarrow_{\Sys_A}^+ H \dder_{\Env_A} H'}$ with $G=\tuple{G',q_0}$ and $G \models c $, $H \models d$ holds (\emph{last-minute recovery}). \end{itemize} \end{defi}

\begin{example}[last-minute correctness of the TNS]  Let $c=\text{NoBlocked}$. The traffic network system is last-minute correct w.r.t. $\tuple{c,c}$ since after applying $\texttt{Block}_A$ there is a blocked track but immediately, by applying $\texttt{Repair}_A$, the blocked track vanishes and there is no blocked track until again $\texttt{Block}_A$ is applied. \end{example}
Condition (R) can be replaced by a simpler condition (R') if we assume the correctness of the enriched system. We have the following characterization: 
\begin{propos}[characterization of last-minute correctness] A joint system $\Sys\Env$ is last-minute correct w.r.t $\tuple{ c, d }$ iff the condition (S) holds and (R') for every transformation sequence $\tuple{G \Rightarrow_{\SysEnv}^* N \Rightarrow_{\Sys_A} H \dder_{\Env_A} H'}$ with $G=\tuple{G',q_0}$ and $G \models c $ , $H \models d$ holds. 
 \end{propos}
\begin{proof} ``(R') $\dder$ (R)'': Assume that (R') holds. Consider a transformation sequence $\tuple{G \Rightarrow_{\SysEnv}^* M \Rightarrow_{\Env_A} N \Rightarrow_{\Sys_A}^+ H \dder_\Env H'}$ with $G \models c  $. Then there is a graph $N'$ s.t. $N \dder_{\Sys_A}^* N' \dder_{\Sys_A} H$. We apply condition (R') to the transformation sequence $\tuple{ G \Rightarrow_{\SysEnv}^* N' \Rightarrow_{\Sys_A} H \dder_{\Env_A} H' }$ and obtain $H \models d$. \\
 ``(R) $\land$ (S) $\dder$ (R')'': Assume that (R) and (S) hold. Consider a transformation sequence $ \tuple{G \Rightarrow_{\SysEnv}^* N \Rightarrow_{\Sys_A} H  \dder_{\Env_A} H'}$ with $G \models c  $. If the subsequence $\tuple{G \Rightarrow_{\SysEnv}^* N \Rightarrow_{\Sys_A} H}$ is also transformation sequence in $\Sys_A$, $H \models d$ since the enriched system is correct w.r.t. $\tuple{c,d}$. If this is not the case, there exist graphs $M',N'$ s.t. $G \dder_{\SysEnv}^* M' \dder_{\Env_A} N' \dder_{\Sys_A}^+ H$. We apply (R) to the transformation sequence $\tuple{G \dder_{\SysEnv}^* M' \dder_{\Env_A} N' \dder_{\Sys_A}^+ H \dder_{\Env_A} H'}$ and obtain $H \models d$.  \end{proof}
\apx{
\begin{bem} Last-minute $k$-step correctness are instances of correctness notions for joint GTSs. One may define other notions, e.g. demanding recovery in only one following path. This sort of notions cannot be formalized in LTL but in \emph{computation tree logic} (CTL). We added an instance of correctness which can be formalized in CTL, to Appendix \ref{CTL} including also the proof. \end{bem}}
\napx{\shortv{\begin{bem}These notions are instances of correctness notions for joint GTSs. One may define other notions, which can be formalized not in LTL but e.g. in CTL. \end{bem}}}
\shortv{We close this section with an overview of the correctness notions:}\longv{We give an overview of the correctness notions:}
\begin{center} \tikzset{
    state/.style={
           rectangle,
           rounded corners,
           draw=black,
           minimum height=2em,
           inner sep=2pt,
           text centered,
           },
}
\vspace*{6mm} \begin{tikzpicture}
\node[state,fill=whitegray] (corr) at (0,0) {\hspace*{8.5mm}correctness of $\SysEnv$\hspace*{8.5mm}};
\node[state,fill=whitegray] (corrS) at (6,-1.5) {\hspace*{8.8mm}correctness of $\Sys_A$\hspace*{8.8mm}};
\node[state,fill=whitegray] (lmcorr) at (6,0) {last-minute correctness of $\SysEnv$};
\node (dot) at (0,-4.35) {$\vdots$};
\node[state,fill=whitegray] (zerocorr) at (0,-1.5) {\hspace*{3mm}$0$-step correctness of $\SysEnv$\hspace*{3mm}};

\draw[-implies,double equal sign distance] (corr) -- (lmcorr);  
\draw[-implies,double equal sign distance] (corr) -- (zerocorr);  

\draw[-implies,double equal sign distance] (2,-2.8) -- (3.75,-1.85);  
\node[state,fill=whitegray] (onecorr) at (0,-3) {\hspace*{3mm}$1$-step correctness of $\SysEnv$\hspace*{3mm}};
\draw[-implies,double equal sign distance] (zerocorr) -- (onecorr);  
\draw[-implies,double equal sign distance] (zerocorr) -- (corrS);  

\draw[-implies,double equal sign distance] (lmcorr) -- (corrS);  

\draw[-implies,double equal sign distance] (onecorr) -- (dot);  
\end{tikzpicture} \end{center}\begin{center} Fig. 7: relations of the correctness notions\end{center}  \vspace*{3mm} 
\longv{The notion of weak $k$-step correctness demands that after an interference of the environment, in at least one following sequence, recovery of the postcondition must occur in at most $k$ steps. \newpage
\begin{defi}[weak $k$-step correctness] Let $k$ be a natural number. A joint system
$\SysEnv$ is \emph{weakly k-step correct} w.r.t. $\tuple{c,d}$ if \begin{itemize} \item[(S)] the enriched system GTS $\Sys$ is correct w.r.t.\ $\tuple{c,d}$ and \item[(R$^k_{\text{w}}$)] for every graph $G=\tuple{G',q_0 }$ with $G'$ a graph over $\Lambda$, $G \models c$ and every transformation sequence $\tuple{G \dder_{\Sys\Env}^* H \dder_{\Env_A} M}$ there is an $N$ s.t. $M \dder_{\SysEnv}^{\le k} N$ and $N \models d$ (\emph{weak k-step recovery}). \end{itemize} \end{defi}
Weak $k$-step correctness is, indeed, a weaker notion than $k$-step correctness. Similarly to $k$-step correctness, we obtain a hierarchy for weak $k$-step correctness.
\begin{propos}[relation to \emph{k}-step correctness]\label{weakhier} Let $k$ be a natural number. \begin{itemize} \item[(i)] If $\SysEnv$ is $k$-step correct w.r.t.\ $\tuple{c,d}$, then $\SysEnv$ is weakly $k$-step correct w.r.t. $\tuple{c,d}$.    \item[(ii)] If $\SysEnv$ is weakly $k$-step correct w.r.t.\ $\tuple{c,d}$, then $\SysEnv$ is weakly $(k+1)$-step correct w.r.t. $\tuple{c,d}$. \end{itemize} \end{propos}

\begin{proof} This follows by definition. \end{proof}}
The following example is weakly $k$-step correct for every $k \ge 1$ but not $k$-step correct for any $k$ w.r.t.\ the same pre- and postcondition.
\begin{example}[weak 1-step correctness] Let $c=\text{NoBlocked}$. Consider the system and environment in Example \ref{ex}. Let $B$ be the regulation automaton given by the following figure: \vspace*{3mm}\begin{center}\scalebox{0.9}{\begin{tikzpicture}[-> ,>= stealth' , node distance=3cm]

 \node[state,fill=whitegray] (R){$q_0$};
 \node[state, fill=whitegray] (S) [right of=R]{$q_1$};

\node (start) [left = 1cm of R] {};

\path (start) edge node{} (R) 
(R) edge[bend left,above] node{\texttt{Block}} (S)
(R) edge[loop above, above] node{\small{\{\texttt{Move}, \texttt{Ascend}, \texttt{Descend}\}}} (S)
(S) edge[loop above, above] node{\small{$\Sys$}} (S)
 (S) edge[bend left, below] node {\texttt{Repair}} (R) ;
 \end{tikzpicture}} \end{center}  \begin{center} Fig. 8: representation of the regulation automaton $B$ \end{center} \vspace*{3mm} In this case, the joint system $\Sys_B \cup \Env_B$ is weakly 1-step w.r.t.\ $\tuple{c,c}$ since $\texttt{Repair}_B$ can be applied after an application of $\texttt{Block}_B$ and before $\texttt{Block}_B$ is applied again, the blocked track must vanish due to an application of $\texttt{Repair}_B$. (By Proposition \ref{weakhier}, it is weakly $k$-step correct for $k \ge 1$.) It is not $k$-step correct w.r.t.\ $\tuple{c,c}$ for any $k$ since the other system rules may be applied arbitrarily often after an application of $\texttt{Block}_B$. \end{example}
\ignore{\begin{example}[weak 1-step correctness] Let $c=\text{NoBlocked}$. Similar to the proof of Proposition \ref{applic}, using the constructions in \cite{Pennemann06}, this graph constraint can be transformed into an application condition. For the rule $\texttt{Block}$, the application condition $\text{acNoBlocked}$ states that there is no blocked track attached to the left handside \twocarNoIndex of $\texttt{Block}$ or outside of the left handside.   
The unregulated traffic network system, considered as a joint system w.r.t.\ the trivial regulation automaton $U$ ($Q=\{q_0 \}$ and $\select \tuple{q_0,q_0}=\Sys \cup \Env$), is weakly 1-step correct w.r.t.\ $\tuple{c,c}$ since $\texttt{Repair}_U$ can always be applied if there is a blocked track. It is not $k$-step correct w.r.t.\ $\tuple{c,c}$ for any $k$. \end{example}}

\section{Reduction to Model Checking}
\label{reduct}

In this section, we show as main result that our correctness notions can be expressed as \textcolor{new}{LTL or CTL constraints, respectively, i.e.,\ that verification can be reduced to LTL or CTL model checking, respectively}. This is favorable since the tool GROOVE \cite{GROOVE} provides a way for LTL\longv{ and CTL} model checking for systems where the states constitute graphs. \par
First, we refine our notion of joint graph transformation systems, namely to \emph{annotated joint graph transformation systems} which also carry the information whether the last applied rule was a system or environment rule. This is realized by a node labeled with ``$\ssys$'' or ``$\senv$''. 
\begin{notation} For a joint graph transformation system $\SysEnv$, the symbol $\m(\Sys):=\ssys$, $\m(\Env):=\senv$, respectively, is the \emph{marking} of $\Sys$, $\Env$, respectively. For a rule $r \in \R$ and $\R \in \{\Sys, \Env\}$, let $\m(r):=\m(\R)$ be the marking of $r$. Further, we define $\prm(r):=\cup_{q \in Q_{\text{pre}}(r) } \{\m(r') \vert r'\in \Sys \cup \Env :q \in Q_{\text{post}}(r')\} \cup \{ \top \vert q=q_0 \} $ \linebreak $\subseteq \{\top, \ssys, \senv\}$ as \emph{premarkings} of $r$ where $Q_{\text{pre}}(r):= \{q \in Q \vert \exists q' \in Q: \tuple{q,q'} \in \delta, r \in \select\tuple{q,q'}\}$ and $Q_{\text{post}}(r'):= \{q \in Q \vert \exists q'' \in Q: \tuple{q'',q} \in \delta, r' \in \select\tuple{q'',q}\}$.\end{notation}
The set of premarkings of $r$ depicts all markings of rules which could have been applied prior to an application of $r$. In the following definition, it can simply be thought of $ \{\top, \ssys, \senv\}$ (the set of all possible markings) instead of the premarkings.   

\begin{defi}[annotated joint graph transformation system] Let $\Sys\Env$ be a joint graph transformation systems w.r.t.\ a regulation automaton $A=\tuple{Q,q_0 , \delta, \select }$ of $\tuple{\Sys,\Env}$. The \textit{annotated joint (graph transformation) system} of $\Sys$ and $\Env$ w.r.t.\ $A$ is ${\Sys'}_A \cup  {\Env'}_A$, shortly $(\Sys \Env)'$, where for a rule set $\R \in \{\Sys,\Env\}$, ${\R'}_A$ denotes the \textit{marked rule set} \begin{align*} \R'_A := \{ &\tuple{\tuple{L,q,m} \dder \tuple{R,q',m'},\ac} \vert  \\ &\tuple{\tuple{L,q} \dder \tuple{R,q'},\ac} \in \R_A, m \in \prm\tuple{L \dder R, \ac}, m' =\m( \R )\} &\end{align*} \textcolor{new}{where $\tuple{\tuple{L,q,m} \dder \tuple{R,q',m'}}$ abbreviates $\brule{\tuple{L,q,m}}{K}{\tuple{R,q',m'}}$ for $ \tuple{L \dder R}=\brule{L}{K}{R}$. For a graph $G$, a state $q$, and a marking $m$, $\tuple{G,q,m}$ denotes the graph $G + \raisebox{0.3mm}{\circlearound{$q$}}+\circlearound{$m$}$, i.e.,\ the disjoint union of $G$, a node labeled with $q$, and a node labeled with $m$.}
\end{defi}\apx{An example of an annotated joint GTS can be found in Appendix \ref{annotated}.}\napx{
\longv{\begin{example}[annotated TNS] Consider the system $\Sys$, the environment $\Env$, and the regulation automaton $A$ given in Example \ref{ex}. The annotated joint system $\Sys'_A \cup \Env'_A$ is given by: \vspace*{3mm}
\begin{center}
$\Sys'_A \left\{ \begin{array}{ll} \texttt{Ascend}'_{A,\top} &: \left\tuple{ \track \hspace*{2mm} \circlearound{$q_0$} \hspace*{2mm} \ntop \hspace*{2mm}\Rightarrow\hspace*{2mm} \onecar \hspace*{2mm}\circlearound{$q_0$} \hspace*{2mm} \off \right} \\  \texttt{Ascend}'_{A,\ssys} &: \left\tuple{ \track \hspace*{2mm} \circlearound{$q_0$}  \hspace*{2mm} \off\hspace*{2mm}\Rightarrow\hspace*{2mm} \onecar \hspace*{2mm}\circlearound{$q_0$}  \hspace*{2mm} \off\right} \\ \texttt{Descend}'_{A,\top} &: \left\tuple{ \onecar  \hspace*{2mm} \circlearound{$q_0$} \hspace*{2mm} \ntop \hspace*{2mm}  \Rightarrow \track  \hspace*{2mm} \circlearound{$q_0$} \hspace*{2mm} \off \right} \\ \texttt{Descend}'_{A,\ssys} &: \left\tuple{ \onecar  \hspace*{2mm} \circlearound{$q_0$} \hspace*{2mm} \off \hspace*{2mm}  \Rightarrow \track  \hspace*{2mm} \circlearound{$q_0$} \hspace*{2mm} \off \right} \\ \texttt{Move}'_{A,\top} &: \left\tuple{ \moveL \hspace*{2mm}\circlearound{$q_0$} \hspace*{2mm} \ntop \hspace*{2mm}\Rightarrow \hspace*{2mm}\moveR \hspace*{2mm}\circlearound{$q_0$}\hspace*{2mm} \off \right}  \\  \texttt{Move}'_{A,\ssys} &: \left\tuple{ \moveL \hspace*{2mm}\circlearound{$q_0$} \hspace*{2mm} \off \hspace*{2mm}\Rightarrow \hspace*{2mm}\moveR \hspace*{2mm}\circlearound{$q_0$}\hspace*{2mm} \off \right}  \\ \texttt{Repair}'_A &: \left\tuple{ \blocked \hspace*{2mm} \circlearound{$q_1$} \hspace*{2mm} \on \hspace*{2mm}\Rightarrow \track \hspace*{2mm} \circlearound{$q_0$}\hspace*{2mm} \off \right} \end{array} \right.$ \\ \vspace*{2mm}\end{center}
\hspace*{18.5mm}$\Env'_A \left\{  \begin{array}{ll} \texttt{Block}'_{A,\top}&\hspace*{4mm}: \left\tuple{ \twocar \hspace*{2mm} \circlearound{$q_0$} \hspace*{2mm} \ntop\hspace*{2mm}\Rightarrow \blockedtwo \hspace*{2mm} \circlearound{$q_1$} \hspace*{2mm} \on \right} \\ & \\ \texttt{Block}'_{A,\ssys}&\hspace*{4mm}: \left\tuple{ \twocar \hspace*{2mm} \circlearound{$q_0$} \hspace*{2mm} \off\hspace*{2mm}\Rightarrow \blockedtwo \hspace*{2mm} \circlearound{$q_1$} \hspace*{2mm} \on \right}  \end{array} \right.$ \begin{center} \vspace*{3mm}Fig. 9: the TNS as annotated joint system\end{center}\end{example}}}
\vspace*{3mm}For every joint system, there is an ``equivalent'' annotated joint system.
\begin{propos}[annotation] For an enriched rule $r=\tuple{ \tuple{L,q} \dder \tuple{R,q'}, \ac} $ of a joint graph transformation system $\Sys \Env$ w.r.t. a regulation automaton  $A=\tuple{Q,q_0 , \delta, \select }$, let $r'=\tuple{ \tuple{L,q,m} \dder \tuple{R,q',m'}, \ac}  $ be a corresponding marked rule in the annotated joint graph transformation system $(\Sys \Env)'$. There is a transformation sequence $\tuple{G_0 ,q_0} \dder_{r_1} \ldots (\dder_{r_n} \tuple{G_n, q_n})$ in $\Sys \Env$ iff there is a transformation sequence $\tuple{G_0 ,q_0, \top} \dder_{r'_1}\tuple{G_1, q_1, m_1 } \dder_{r'_2}\ldots (\dder_{r'_n} \tuple{G_n, q_n, m_n})$ in $(\Sys\Env)'$ with markings $m_i \in \{\ssys, \senv\}$ for $i \ge 1$. Moreover, in this case, for all $i \ge 1$, $m_i = \m(\R)$ iff $r_{i} \in \R_A$. \end{propos}

\begin{proof} For every $G, G'$ over $\Lambda$, $\tuple{G,q} \dder_r \tuple{G',q'} \textnormal{ iff } \tuple{G,q,m} \dder_{r'} \tuple{G',q',m'}$. By definition of marked rule set, $m' = \m(\R)$ iff $r \in \R_A$. \end{proof}

\begin{bem} Annotated joint graph transformation systems are graph transformation systems over the label set $\Lambda \cup Q \cup \{\top, \ssys, \senv\}$.  \end{bem}
\textcolor{new}{\begin{convention} Annotated joint graph transformation systems are applied to graphs of the form $\tuple{G, q_0, \top} $ where $G$ is a graph over $\Lambda$ and $q_0$ is the starting state of the regulation automaton.\end{convention}}
\longv{\subsection{Reduction LTL Model Checking}
We first consider the formalization of $k$-step and last-minute correctness as LTL constraints.} The following theorem states that checking $k$-step/last-minute correctness of a joint system is equivalent to checking whether the annotated joint system satisfies a certain LTL constraint. 

\begin{satz}[from correctness to LTL] \label{main}For every pair $\tuple{c,d}$ of graph constraints and every natural number $k$, there are LTL constraints $\phi\tuple{c,d}$ and $\phi_k\tuple{c,d}$ s.t.\ for every joint system $\Sys\Env$: \begin{itemize} \item[(a)] $\SysEnv$ is $k$-step correct w.r.t.\ $\tuple{c,d}$ iff $(\SysEnv)' \models \phi_k \tuple{c,d}$. \item[(b)] $\SysEnv$ is last-minute correct w.r.t.\ $\tuple{c,d}$ iff $(\SysEnv)' \models \phi \tuple{c,d}$. \end{itemize}  \end{satz}
For the condition of system correctness, the $k$-step, and the last-minute recovery condition, we construct LTL constraints $\PCS \tuple{c,d}$, $\kSC\tuple{c,d}$, and $\GR\tuple{c,d}$, respectively, where $\tuple{c,d}$ are the pre- and postcondition. Each of the considered LTL constraint formalizes the corresponding condition. By combining them, we obtain conjunctions each formalizing the corresponding correctness notion.
\begin{construction}  Let \\
\begin{minipage}[h]{0.47\textwidth}
\begin{center}\begin{align*}
\phi_k\tuple{c,d} &:=  \PCS\tuple{c,d} \land  \kSC\tuple{c,d} \\
\phi \tuple{c,d}&:=  \PCS\tuple{c,d} \land  \GR\tuple{c,d} \\
\end{align*} \end{center}
\end{minipage}\begin{minipage}[h]{0.05cm}\hfill \rule[-8mm]{0.3mm}{18mm} \hfill\end{minipage}
\begin{minipage}[h]{0.47\textwidth}
\begin{center}\begin{align*}  \PCS\tuple{c,d} &:= c \dder \opX( (\sys \land d ) \opW \env ) \\
\kSC \tuple{c,d} & := c  \dder \opX \opG ( \env \dder \vee_{j=0}^k \opX^j d ) \\
\GR\tuple{c,d}  &:= c  \dder \opG ( s \land \opX e \dder d ) \\
 \end{align*}\end{center} \end{minipage} \\
where $k$ is a natural number, $\opX^0 $ is ``an empty operator'', and for $j \ge 0$, $\opX^{j+1}:=~ \opX^j \opX$ denotes the iterated next-operator in LTL. The graph constraint $s:=\exists (\circlearound{$\ssys$})$ means that the last applied rule was a system rule while $e:=\exists (\circlearound{$\senv$})$ means that the last applied rule was an environment rule. \end{construction}
We split the proof into three lemmata showing that the formalizations of correctness notions as LTL formulas are valid. We start with the formalization of the correctness of the enriched system. \par
\textcolor{new}{The LTL constraint $\PCS\tuple{c,d}$ can be read as ``If $c$ holds at the start, from the following step on, either a system rule was applied ($s$ holds) and $d$ holds forever or until an environment rule was applied ($e$ holds)''.}
\begin{lemma}[from system correctness to LTL] \label{lemS}For a joint graph transformation system $\SysEnv$, the enriched system $\Sys_A$ is correct w.r.t.\ $\tuple{c,d}$ iff $(\SysEnv)' \models \PCS\tuple{c,d}$. \end{lemma}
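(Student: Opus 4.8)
The plan is to route everything through the annotation proposition, which lets us read off from the marking node of each graph in a sequence of $\overline{(\SysEnv)'}$ whether the corresponding step applied a system rule (so that $s$ holds) or an environment rule (so that $e$ holds), the start graph carrying the marking $\top$. Under this correspondence an infinite sequence $S=\tuple{\tuple{G_0,q_0,\top} \dder \ldots}$ of $\overline{(\SysEnv)'}$ restricts to an infinite sequence of $\overline{\SysEnv}$, and a maximal initial block of system steps $\tuple{G_0,q_0,\top} \dder \tuple{G_1,q_1,\ssys} \dder \ldots \dder \tuple{G_j,q_j,\ssys}$ is precisely a system derivation $G_0 \dder_{\Sys_A}^+ G_j$ (the $\texttt{Skip}$ steps of the completion only repeat the current graph and preserve its marking). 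Since $c$ and $d$ are constraints over $\Lambda$, their validity does not depend on the state or marking nodes, so correctness of $\Sys_A$ amounts to: for every $G_0 \models c$ and every system derivation $G_0 \dder_{\Sys_A}^+ H$ one has $H \models d$. The whole lemma then reduces to matching this reformulation with the temporal shape of $\PCS\tuple{c,d} = c \dder \opX((s \land d)\opW e)$.

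For the direction ``correct $\Rightarrow$ formula'' I would take an arbitrary infinite sequence $S$ of $\overline{(\SysEnv)'}$ starting in $\tuple{G_0,q_0,\top}$; if $G_0 \not\models c$ the implication is vacuous, so assume $G_0 \models c$. Let $l \ge 1$ be the index of the first environment step (the least $i$ with $G_i \models e$), or $l=\infty$ if none occurs. For every $k$ with $1 \le k < l$ the step is a system step, so $G_k \models s$, and the prefix up to $k$ is a system derivation $G_0 \dder_{\Sys_A}^+ G_k$, whence $G_k \models d$ by correctness; thus $s \land d$ holds at all these positions. If $l<\infty$ then $G_l \models e$ yields $(s\land d)\opU e$, and if $l=\infty$ we obtain $\opG(s\land d)$; in both cases $(s\land d)\opW e$ holds at position $1$, i.e.\ $S \models \PCS\tuple{c,d}$.

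For the converse ``formula $\Rightarrow$ correct'' I would start from $G_0 \models c$ together with an arbitrary system derivation $G_0 \dder_{\Sys_A}^+ H$, lift it along the annotation proposition to a pure-system annotated prefix ending in $\tuple{H,q,\ssys}$, and extend it (through the completion, appending further applicable rules or $\texttt{Skip}$) to an infinite sequence $S$ of $\overline{(\SysEnv)'}$. Along this prefix of length $j$ (with $H=G_j$) no environment step occurs, so $e$ fails at positions $1,\ldots,j$. Since $S \models \PCS\tuple{c,d}$ and $G_0 \models c$, position $1$ satisfies $(s\land d)\opW e$; because the first occurrence of $e$ can only be at some index larger than $j$ (or never), the weak-until forces $s \land d$, hence $d$, at position $j$. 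Therefore $H \models d$, and $\Sys_A$ is correct.

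The main obstacle, and the place where the argument must be run with care, is the bookkeeping at the boundaries of the temporal operators: the shift by $\opX$ past the initial $\top$-marked graph, the identification of ``one-or-more system steps'' (the $^+$ in $\dder_{\Sys_A}^+$) with the semantics of $\opW$ read from position $1$, and the role of the $\texttt{Skip}$ steps of the completion, which preserve both graph and marking and thus neither manufacture a spurious environment step nor invalidate $s \land d$. Once the correspondence between marking nodes and rule origin is pinned down by the annotation proposition, each of the two implications becomes a direct, if slightly tedious, transcription between the two formulations.
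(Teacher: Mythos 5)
Your proposal is correct and follows essentially the same route as the paper's proof: in the forward direction you identify the maximal initial system-only prefix (the paper's ``longest connected subsequence in $\Sys'_A$'') and split on whether an environment step ever occurs to discharge the weak-until, and in the backward direction you complete a given system derivation to an infinite sequence of $\overline{(\SysEnv)'}$ and read $d$ off the formula at the final position. Your treatment is, if anything, slightly more explicit than the paper's about the correspondence via the annotation proposition and about the behaviour of the \texttt{Skip} steps, but the argument is the same.
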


\begin{proof} ``$\dder$'': Let $S= \tuple{ G_0 \dder G_1 \dder \ldots} $ be an infinite transformation sequence in $\overline{(\SysEnv)'}$. Assume that $G_0 \models c$. Let $S' \subseteq S$ be the longest connected subsequence of $S$, which is also a sequence in $\Sys'_A$ and starts with $G_0$. Since $\Sys_A$ is correct w.r.t. $\tuple{c ,d}$, every graph $G_i$, $i \ge 1$ in $S'$ satisfies $d$. If $S'=S$, we have that $G_0 \models \opX \opG (\sys \land d )$, which implies $G_0 \models \opX( (\sys \land d ) \opW \env) $. If $S' \neq S$, there is a smallest $j$ with $G_j \models \env$. Due to the correctness of $\Sys_A$, every $G_i$ for $1 \le i \le j-1$ satisfies $d$. In this case, $G_0 \models \opX( (\sys \land d ) \opU \env )$, this implies $G_0 \models \opX( (\sys \land d ) \opW \env )$. \\ We showed $G_0 \models c \dder  \opX( (\sys \land d ) \opW \env )$, so $S \models \PCS\tuple{c,d}$.  \\
``$\Leftarrow$'': Assume that $G \models c $ and $G \dder_{\Sys'_A}^+ H$. We define $S':=\tuple{G  \dder_{\Sys'_A}^+ H}$ which is a finite sequence in $\Sys'_A$. We can complete $S'$ to an infinite sequence $S \supseteq S'$ in $\overline{(\SysEnv)'}$. Now holds $S \models  \PCS\tuple{c,d}$, so $G \models \opX( (\sys \land d ) \opW \env )$. This implies that $H \models d $. \end{proof}
The next lemma shows that the formalization of the recovery condition of $k$-step correctness is valid. Together with Lemma \ref{lemS}, this implies part (a) of Theorem \ref{main}. \par
\textcolor{new}{The LTL constraint $\kSC\tuple{c,d}$ can be read as ``If $c$ holds at the start, from the following step on, whenever an environment rule was applied ($e$ holds), there is a $0\le j \le k$ s.t.\ in $j$ steps $d$ holds''.}
\begin{lemma}[from \emph{k}-step recovery to LTL] \label{lemk}A joint graph transformation system $\SysEnv$ is $k$-step correct w.r.t.\ $\tuple{c,d}$ iff the enriched system $\Sys_A$ is correct w.r.t. $\tuple{c,d}$ and $(\SysEnv)' \models \kSC\tuple{c,d}$. \end{lemma}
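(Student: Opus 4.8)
The plan is to reduce the claim to a single equivalence between the recovery condition and the LTL constraint. Since $k$-step correctness is by definition the conjunction of (S) (correctness of $\Sys_A$) and the recovery condition (R$^k$), and since (S) appears verbatim on the right-hand side of the asserted equivalence, it suffices to show that (R$^k$) holds iff $(\SysEnv)' \models \kSC\tuple{c,d}$; conjoining (S) to both sides then gives the lemma, with no extra hypothesis needed. Throughout I would use the annotation proposition, which puts transformation sequences of $\SysEnv$ and of $(\SysEnv)'$ in correspondence so that the marking node records the type of the last applied rule; in particular $\env$ holds at a graph exactly when it was produced by an environment rule. This correspondence extends to the completions $\SysEnvC$ and $\overline{(\SysEnv)'}$, because $\texttt{Skip}$ leaves the marking untouched and, by Proposition \ref{applic}, is applicable precisely when no non-$\texttt{Skip}$ rule is; hence $\texttt{Skip}$ can occur only as an infinite tail of a sequence (once applied, the graph is frozen and $\texttt{Skip}$ is forced forever). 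Finally, as $c,d$ are constraints over $\Lambda$ while state and marking nodes carry labels from $Q \cup \{\top,\ssys,\senv\}$, adding these nodes does not affect satisfaction, so $\tuple{G',q_0,\top}\models c$ iff $G'\models c$, and likewise for $d$.

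Unfolding the semantics, $(\SysEnv)' \models \kSC\tuple{c,d}$ means: for every infinite sequence $S=\tuple{G_0 \dder G_1 \dder \ldots}$ in $\overline{(\SysEnv)'}$ with $G_0 \models c$, and every $i \ge 1$ with $G_i \models \env$, there is some $0 \le j \le k$ with $G_{i+j}\models d$. This is precisely the LTL reading of (R$^k$), so the proof is essentially a translation between the two formulations using the correspondences above.

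For the direction $\kSC\tuple{c,d} \Rightarrow$~(R$^k$), I would start from an arbitrary sequence $\tuple{G \dder_{\Sys\Env}^* H \dder_{\Env_A} M \dder_{\SysEnvC}^k N}$ witnessing the premise of (R$^k$) with $G=\tuple{G',q_0}$ and $G\models c$. Via the annotation correspondence I lift it to a finite sequence in $\overline{(\SysEnv)'}$ starting at $\tuple{G',q_0,\top}$, in which $M$ sits at some position $i \ge 1$ carrying the marking $\senv$, so $G_i \models \env$; I then complete this to an infinite $S$ in $\overline{(\SysEnv)'}$. Applying $\kSC\tuple{c,d}$ to $S$ (legitimate since $G_0\models c$) yields $0 \le j \le k$ with $G_{i+j}\models d$. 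Because $j \le k$, the graph $G_{i+j}$ lies in the already-given window $M,\dots,N$ rather than in the completion; taking $N'$ to be its un-annotated counterpart gives $M \dder_{\SysEnvC}^{\le k} N'$ with $N'\models d$, which is exactly (R$^k$).

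For the converse (R$^k$)~$\Rightarrow \kSC\tuple{c,d}$, take an arbitrary infinite $S=\tuple{G_0 \dder G_1 \dder \ldots}$ in $\overline{(\SysEnv)'}$; if $G_0 \not\models c$ the constraint holds vacuously, so assume $G_0 \models c$. Fix $i \ge 1$ with $G_i \models \env$ and let $p \le i$ be the largest position at which a non-$\texttt{Skip}$ rule was applied. Since $\texttt{Skip}$ preserves markings, the marking of $G_i$ was set at step $p$, so that step is an environment step and $G_p = G_{p+1} = \cdots = G_i$. Because $\texttt{Skip}$ occurs only as a tail, the prefix $G_0 \dder \cdots \dder G_{p-1}$ is $\texttt{Skip}$-free and un-annotates to a sequence in $\Sys\Env$; the whole extracted segment thus maps back to $\tuple{G \dder_{\Sys\Env}^* H \dder_{\Env_A} M \dder_{\SysEnvC}^k N}$ with $M$ the image of $G_p$. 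Applying (R$^k$) produces $N'\models d$ with $M \dder_{\SysEnvC}^{\le k} N'$, i.e.\ $G_{p+j'}\models d$ for some $0\le j'\le k$; if $p=i$ this directly gives recovery at $i$, and if $p<i$ the system is frozen from $G_p$ on, whence $G_{p+j'}=G_p=G_i$ and recovery holds with $j=0$. The main obstacle, and the point demanding the most care, is exactly this bookkeeping around the completion: establishing that $\texttt{Skip}$ only ever appears as a tail and transports the marking unchanged, so that the $\dder_{\Sys\Env}^*$ prefix required by (R$^k$) is genuinely $\texttt{Skip}$-free and the recovery window aligns with the $\le k$ next-steps of $\kSC$.
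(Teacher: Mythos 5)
Your proof is correct and follows essentially the same route as the paper's: it reduces the lemma to the equivalence of (R$^k$) with $(\SysEnv)' \models \kSC\tuple{c,d}$ and translates in both directions by lifting/projecting sequences through the annotation correspondence and completing finite sequences to infinite ones. The only difference is that you make explicit the bookkeeping around \texttt{Skip} (that it occurs only as an infinite tail, preserves the marking, and hence keeps the recovery window aligned), which the paper's proof leaves implicit.
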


\begin{proof} We have to show that (R$^k$) is equivalent to $(\SysEnv)' \models \kSC\tuple{c,d}$. \\ 
``$\dder$'': Let $S= \tuple{ G_0 \dder G_1 \dder \ldots} $ be an infinite transformation sequence in $\overline{(\SysEnv)'}$. Assume that $G_0 \models c$. Let $M= G_i$ with $i \ge 1$ and $M \models \env$ be an element of $S$. We can construct a  subsequence $S':=\tuple{G \dder_{(\Sys\Env)'}^* H \dder_{\Env'_A} M \dder_{\overline{(\SysEnv)'}}^k  N  } \subset S$ since $M \models \env$. By definition of $k$-step correctness, there exists a subsequence $\tuple{M \dder_{\overline{(\SysEnv)'}}^{\le k}  N'} \subset S'$ with $N' \models d$. This means, there is a $0 \le j \le k$ with $M \models \opX^j d $, so $M \models  \vee_{j=0}^k \opX^j d$. We showed $G_0 \models \opX \opG ( \env \dder \vee_{j=0}^k \opX^j d )$. \\
``$\Leftarrow$'': Let $S':=\tuple{G \dder_{(\Sys\Env)'}^* H \dder_{\Env'_A} M \dder_{\overline{(\SysEnv)'}}^k  N  }$ be transformation sequence with $G \models c$. We can complete $S'$ to an infinite transformation sequence $S \supset S'$. We have that $G \models \kSC\tuple{c,d}$, this means $G \models  \opX \opG ( \env \dder \bigvee_{j=0}^k \opX^j d )$. Thus, there exists a $0 \le j \le k$ with $M \models \opX^j d $.
\end{proof}
The next lemma shows that the formalization of the recovery condition of last minute correctness is valid. Together with Lemma \ref{lemS}, this implies part (b) of Theorem \ref{main}.\par
\textcolor{new}{The LTL constraint $\GR\tuple{c,d}$ can be read as ``If $c$ holds at the start, whenever a system rule was applied and an environment rule is applied next, $d$ holds''.}

\begin{lemma}[from last-minute recovery to LTL] \label{lemlm}A joint graph transformation system $\SysEnv$ is last-minute correct w.r.t.\ $\tuple{c,d}$ iff the enriched system $\Sys_A$ is correct w.r.t. $\tuple{c,d}$ and $(\SysEnv)' \models \GR\tuple{c,d}$. \end{lemma}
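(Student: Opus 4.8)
The plan is to prove the equivalence by showing that the recovery condition (R) of last-minute correctness is, under the assumption of system correctness (S), logically equivalent to the satisfaction of the LTL constraint $\GR\tuple{c,d} = c \dder \opG ( s \land \opX e \dder d )$ by the annotated system $(\SysEnv)'$. Since the full correctness notion is the conjunction (S) $\land$ (R), and Lemma \ref{lemS} already handles (S) via $\PCS\tuple{c,d}$, it suffices to establish that (R) corresponds to $\GR\tuple{c,d}$. I would mirror the two-direction structure of the proofs of Lemmata \ref{lemS} and \ref{lemk}, using the annotation proposition to translate between markings $s$, $e$ on graphs and the system/environment origin of the applied rules. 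The central dictionary entry is: along an infinite sequence $S = \tuple{G_0 \dder G_1 \dder \ldots}$ in $\overline{(\SysEnv)'}$, the subformula $s \land \opX e$ holds at $G_i$ exactly when $G_i$ carries the $\ssys$-marking (last applied rule was a system rule) and $G_{i+1}$ carries the $\senv$-marking (next applied rule is an environment rule)—which is precisely the situation of a maximal system-segment ending right before an environment step.

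For the forward direction (``$\dder$''), I would assume last-minute correctness and take an arbitrary infinite sequence $S$ in $\overline{(\SysEnv)'}$ with $G_0 \models c$. To verify $G_0 \models \opG(s \land \opX e \dder d)$, I fix any index $i$ with $G_i \models s \land \opX e$, so the transformation $G_i \dder G_{i+1}$ applies an environment rule while the rule reaching $G_i$ was a system rule. I then extract from $S$ a finite prefix realizing the shape $\tuple{G \Rightarrow_{\SysEnv}^* N \Rightarrow_{\Sys_A} H \dder_{\Env_A} H'}$ required by condition (R'), identifying $H = G_i$ and $H' = G_{i+1}$; applying the characterization proposition (R') gives $H \models d$, i.e.\ $G_i \models d$, as needed. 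Here I would invoke the preceding Proposition (characterization of last-minute correctness) so that I only need the simpler single-system-step condition (R') rather than reconstructing the full environment-to-environment block of (R).

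For the reverse direction (``$\Leftarrow$''), I would assume (S) holds and $(\SysEnv)' \models \GR\tuple{c,d}$, and verify (R) directly. Given a transformation sequence $\tuple{G \Rightarrow_{\SysEnv}^* M \Rightarrow_{\Env_A} N \Rightarrow_{\Sys_A}^+ H \dder_{\Env_A} H'}$ with $G = \tuple{G',q_0}$ and $G \models c$, I lift it to the annotated system via the annotation proposition and complete it to an infinite sequence $S$ in $\overline{(\SysEnv)'}$. At the graph $H$, the preceding step lies in $\Sys_A$ (so $H \models s$) and the following step $H \dder_{\Env_A} H'$ lies in $\Env_A$ (so $\opX e$ holds at $H$); hence $H \models s \land \opX e$. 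Since $S \models \GR\tuple{c,d}$ and $G \models c$, the $\opG$-clause forces $H \models d$, which is exactly the last-minute recovery conclusion.

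I expect the main obstacle to be the bookkeeping around markings at the boundary steps—precisely pinning down that $s \land \opX e$ at $H$ corresponds to $H$ being the terminal graph of a maximal system-run immediately preceding an environment intervention. One must be careful that the $\opX e$ conjunct refers to the marking of $G_{i+1}$ (the step \emph{out of} $H$) while $s$ refers to the marking of $H$ itself (the step \emph{into} $H$), and that the annotation proposition correctly certifies both: $m_i = \m(\Sys)$ iff $r_i \in \Sys_A$ and $m_{i+1} = \m(\Env)$ iff $r_{i+1} \in \Env_A$. Apart from this alignment, the argument is a routine translation paralleling Lemma \ref{lemS}, so I would keep the exposition brief and lean on the already-established characterization and annotation results.
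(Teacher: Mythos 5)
Your proposal is correct and follows essentially the same route as the paper: the forward direction reduces to the simpler condition (R') via the characterization proposition and reads $H\models d$ off the $\opG$-clause at the position where $s\land\opX e$ holds, and the reverse direction completes the given finite sequence to an infinite one and uses the markings to locate $s\land\opX e$ at $H$. The only cosmetic difference is that you verify (R) directly in the reverse direction where the paper verifies (R'); both are equivalent given the characterization proposition, so no further comment is needed.
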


\begin{proof} ``$\dder$'': Let $S= \tuple{ G_0 \dder G_1 \dder \ldots} $ be a transformation sequence in $\overline{(\Sys\Env)'}$. Assume that $G_0 \models c$. We have to show that for every $i \ge 0$, $G_i \models \sys \land \opX \env \dder  d$. Consider the transformation sequence $\tuple{G_0 \dder \ldots \dder G_i \dder G_{i+1}} \subseteq S$ and assume $G_i \models \sys$, $G_{i+1} \models \env$. Since (R') holds, we obtain that $G_{i} \models d$. \\
``$\Leftarrow$'': Let $S'=\tuple{G \Rightarrow_{(\SysEnv)'}^* N \Rightarrow_{\Sys'_A} H }$ be a transformation sequence with $G \models ~c  $  and $H \models \env$. We can complete $S'$ to a transformation sequence $S \supseteq S'$ in $\overline{(\Sys\Env)'}$. It holds $S \models \GR\tuple{c,d}$, i.e.,\ $S \models c \dder   \opG ( \sys \land \opX \env \dder d)$. Since $G \models c$, also $G \models ~{ \opG ( \sys \land \opX \env \dder  d)}$. Since $N \dder_{\Sys_A'} H$ implies that $N \models \sys$, we obtain $H \models d$. \end{proof}

\begin{proof}[of Theorem 1] This follows directly by Lemma \ref{lemS}, \ref{lemk}, and \ref{lemlm}.\end{proof}
\longv{\subsection{Reduction to CTL Model Checking}
Similar to Theorem \ref{main}, we prove that weak $k$-step correctness can be formalized as CTL constraint. The following theorem states that checking weak $k$-step correctness of a joint system is equivalent to checking whether the annotated joint system satisfies a certain CTL constraint. 

\begin{satz}[from correctness to CTL] For every pair $\tuple{c,d}$ of graph constraints and every natural number $k$, there is a CTL constraint $\theta_k$ s.t.\ for every joint GTS $\SysEnv$, $\SysEnv$ is weakly $k$-step correct iff $(\SysEnv)' \models \theta_k \tuple{c,d}$. \end{satz}
\begin{construction} Let 
\begin{align*} \theta_k\tuple{c,d} &:= \PCSctl \tuple{c,d} \land \kWC\tuple{c,d}\\
\PCSctl\tuple{c,d} &:=  c \dder \opAX( (\sys \land d ) \opAW \env ) \\
\kWC\tuple{c,d}  &:= c  \dder \opAX \opAG ( \env \dder \vee_{j=0}^k \opEX^j d ) 
\end{align*} where $k$ is a natural number, $\opEX^0 $ is ``an empty operator'', and for $j \ge 0$, $\opEX^{j+1}:= \opEX^j \opEX$ denotes the iterated existential next-operator in CTL.\end{construction}
\textcolor{new}{The CTL constraint $\PCSctl\tuple{c,d}$ can be read as ``If $c$ holds at the start, for all following sequences, from the following step on, for all following sequences, either a system rule was applied and $d$ holds forever or until an environment rule was applied''.} It is the ``CTL version'' of the LTL constraint $\PCS\tuple{c,d}$.
\begin{lemma}[from system correctness to CTL] \label{lemIV}For a joint system $\SysEnv$, the enriched system $\Sys_A$ is correct w.r.t. $\tuple{c,d}$ iff $(\SysEnv)' \models \PCSctl\tuple{c,d}$. \end{lemma}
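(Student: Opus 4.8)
The plan is to transport the proof of Lemma~\ref{lemS} to the branching setting, since $\PCSctl\tuple{c,d}$ is obtained from $\PCS\tuple{c,d}$ by replacing the linear operators $\opX,\opW$ with their universally path-quantified versions $\opAX,\opAW$. The only genuinely new work lies in discharging the two nested ``for all paths'' quantifiers. Two facts will be used repeatedly: by the annotation proposition, on every path through $\overline{(\SysEnv)'}$ each non-initial graph carries exactly one of the markings $\ssys$ or $\senv$, recording whether the last rule applied was a system or an environment rule, so the graph constraints $\sys$ and $\env$ are mutually exclusive there; and since $c,d$ are constraints over $\Lambda$, their validity is insensitive to the state- and marking-nodes, which lets me pass between $\Sys_A$ and $\Sys'_A$ freely.

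For ``$\Rightarrow$'', assume $\Sys_A$ is correct w.r.t.\ $\tuple{c,d}$ and let $G$ be the root of a transformation tree with $G\models c$; I must show $G\models\opAX((\sys\land d)\opAW\env)$. Unfolding the CTL semantics, this amounts to: for every direct successor $H_1$ of $G$ and every infinite path $\tuple{H_1\dder H_2\dder\ldots}$ in $\overline{(\SysEnv)'}$, the weak-until $(\sys\land d)\opW\env$ holds from $H_1$. Given such a path, let $l$ be the first position at which $\env$ holds, if any. Then $H_1,\dots,H_{l-1}$ all carry marking $\ssys$, so $G\dder_{\Sys'_A}^{+}H_k$ for each $1\le k\le l-1$; correctness together with $G\models c$ forces $H_k\models d$, hence $H_k\models\sys\land d$, while $H_l\models\env$ closes the until. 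If no environment step ever occurs, every $H_k$ is system-rule-reachable from the $c$-graph $G$ and so $H_k\models\sys\land d$ for all $k\ge1$, which validates the weak-until through its $\opG$-disjunct. The decisive observation is that correctness of $\Sys_A$ is already a universal statement over all system-rule-reachable graphs, which is precisely what the two universal path quantifiers require; no branch of any path can escape it.

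For ``$\Leftarrow$'', assume $(\SysEnv)'\models\PCSctl\tuple{c,d}$ and let $G\models c$ with $G\dder_{\Sys'_A}^{+}H$, which, via the annotation proposition, is what correctness of $\Sys_A$ amounts to on the annotated system. Taking $G$ as the root of a transformation tree, the hypothesis yields $G\models c\dder\opAX((\sys\land d)\opAW\env)$, whence $G\models\opAX((\sys\land d)\opAW\env)$. I then instantiate $\opAX$ at the first graph $H_1$ of the given sequence $G\dder_{\Sys'_A}H_1\dder_{\Sys'_A}\ldots\dder_{\Sys'_A}H_n=H$, complete this finite prefix to an infinite path, and read off $(\sys\land d)\opW\env$ along it from $\opAW$. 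Since $H_1,\dots,H_n$ all carry marking $\ssys$, none of them satisfies $\env$, so the first $\env$-position on this path is strictly beyond $n$; the weak-until therefore forces $\sys\land d$, in particular $d$, at every position up to $n$, giving $H\models d$.

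I expect the main obstacle to be the careful bookkeeping of the two layers of universal path quantification. In ``$\Rightarrow$'' one must confirm the argument is genuinely uniform over all continuations --- which it is, because correctness quantifies over every graph reachable by system rules rather than along one path --- and in ``$\Leftarrow$'' one must check that instantiating both quantifiers at a single, explicitly chosen path extending the given finite sequence suffices. As in Lemma~\ref{lemS}, it is the \emph{weak} until that covers the case where the environment never interferes; replacing it by a strong until would wrongly demand an eventual environment step.
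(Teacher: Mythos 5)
Your proof is correct and takes exactly the route the paper intends: the paper's own proof of this lemma is the single line ``Similar to the proof of Lemma~\ref{lemS}'', and your argument is precisely that analogy carried out explicitly, with the only new content being the (correct) bookkeeping of the two nested universal path quantifiers in $\opAX$ and $\opAW$. Nothing further is needed.
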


\begin{proof} Similar to the proof of Lemma \ref{lemS}.\end{proof}
\textcolor{new}{The CTL constraint $\kWC\tuple{c,d}$ can be read as ``If $c$ holds at the start, for all following sequences, from the following step on, for all following sequences, whenever an environment rule was applied, there exists a following sequence and $0 \le j \le k$ s.t.\ $d$ holds in $j$ steps''.}

\begin{lemma}[from weak \emph{k}-step recovery to CTL] \label{lemV}A joint system $\SysEnv$ is weakly $k$-step correct w.r.t. $\tuple{c,d}$ iff the enriched system $\Sys_A$ is correct w.r.t.\ $\tuple{c,d}$ and $(\SysEnv)' \models \kWC\tuple{c,d}$. \end{lemma}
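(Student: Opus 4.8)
The plan is to follow the template of Lemma \ref{lemk}, reducing the claim to a single equivalence: the weak recovery condition (R$^k_{\text{w}}$) holds if and only if $(\SysEnv)' \models \kWC\tuple{c,d}$. Indeed, weak $k$-step correctness is by definition the conjunction of (S) with (R$^k_{\text{w}}$), while the right-hand side of the lemma is the conjunction of (S) (correctness of $\Sys_A$) with $(\SysEnv)' \models \kWC\tuple{c,d}$; hence this equivalence immediately yields the lemma. The only conceptual change relative to Lemma \ref{lemk} is that the universal recovery of $k$-step correctness (``every $k$-step continuation recovers'') is weakened to existential recovery (``some continuation of length $\le k$ recovers''), which is mirrored by replacing the iterated next-operator $\opX^j$ of $\kSC$ with the iterated \emph{existential} next-operator $\opEX^j$ of $\kWC$, and by reading the globally-operator branchingly as $\opAG$.

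First I would unfold the CTL semantics of $\kWC\tuple{c,d}=c \dder \opAX\opAG(\env \dder \vee_{j=0}^k \opEX^j d)$ over a transformation tree $T(G_0)$ with root $G_0=\tuple{G',q_0,\top}$. By the CTL semantics, $M \models \opEX^j d$ holds iff there is a path $M \dder^j N$ with $N \models d$; hence $M \models \vee_{j=0}^k \opEX^j d$ iff there is $N$ with $M \dder^{\le k} N$ and $N \models d$, which is exactly the conclusion of weak $k$-step recovery. Likewise, $G_0 \models \opAX\opAG\,\psi$ holds iff every graph $M$ reachable from $G_0$ in at least one step satisfies $\psi$, where $\psi = (\env \dder \vee_{j=0}^k \opEX^j d)$. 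Finally, using the annotation proposition, $M \models \env$ holds iff the last rule producing $M$ was an environment rule, i.e.\ iff $M$ is the endpoint of a sequence $\tuple{G_0 \dder^* H \dder_{\Env'_A} M}$. Thus $G_0 \models \kWC\tuple{c,d}$ reads precisely as: if $G_0 \models c$, then for every such environment-terminated sequence there is a recovery path of length $\le k$ from $M$ to some $N \models d$.

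The two directions then follow by matching sequences in $\SysEnv$ with sequences in $(\SysEnv)'$ via the annotation proposition, exactly as in Lemma \ref{lemk}. For ``(R$^k_{\text{w}}$) $\Rightarrow$ $(\SysEnv)'\models\kWC$'', I take an arbitrary root $G_0 \models c$ and an arbitrary $M$ reachable in at least one step with $M \models \env$; projecting away the state and marking nodes yields a sequence $\tuple{G \dder^*_{\SysEnv} H \dder_{\Env_A} M}$ to which (R$^k_{\text{w}}$) applies, producing the required recovery path, which lifts back to witness $\vee_{j=0}^k \opEX^j d$ at $M$. For the converse, I lift an arbitrary sequence $\tuple{G \dder^*_{\SysEnv} H \dder_{\Env_A} M}$ to the annotated system, where $M \models \env$ and $M$ is reachable in at least one step from the root $G_0=\tuple{G',q_0,\top}$ (note $G_0 \models c$, since the constraint $c$ over $\Lambda$ is insensitive to the added state and marking nodes); then $(\SysEnv)'\models\kWC$ forces $M \models \vee_{j=0}^k \opEX^j d$, yielding the recovery path demanded by (R$^k_{\text{w}}$).

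The main obstacle is bookkeeping around the completion rather than any deep argument. The recovery in (R$^k_{\text{w}}$) is taken in the joint system $\SysEnv$, whereas $\kWC$ is evaluated over the completion $\overline{(\SysEnv)'}$, whose transformation tree may contain $\texttt{Skip}$-steps. I would argue that this causes no discrepancy: a genuine recovery path $M \dder^{\le k} N$ in $\SysEnv$ is also a path in $\overline{(\SysEnv)'}$, so it witnesses some $\opEX^j d$; conversely, since $\texttt{Skip}=\tuple{\varnothing \dder \varnothing}$ leaves the underlying graph unchanged and $d$ is a constraint over $\Lambda$ (blind to states and markings), interleaving $\texttt{Skip}$-steps can never produce a ``spurious'' witness of $d$ that was not already available at step $0$ or along a real path of length $\le k$. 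The remaining care is to ensure every reachable graph lies on some infinite path, which is guaranteed by the completion (it always enables $\texttt{Skip}$), so that the branching readings of $\opAG$ and $\opEX^j$ over infinite paths coincide with plain reachability.
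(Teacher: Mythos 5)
Your proof is correct and follows essentially the same route as the paper: reduce the lemma to the equivalence of (R$^k_{\text{w}}$) with $(\SysEnv)' \models \kWC\tuple{c,d}$, unfold the CTL semantics of $\opAX\opAG$ and $\opEX^j$ over the transformation tree, and transfer sequences between $\SysEnv$ and $(\SysEnv)'$ via the annotation proposition. Your explicit bookkeeping about \texttt{Skip}-steps in the completion (that they cannot create spurious witnesses of $d$ and guarantee every reachable graph lies on an infinite path) is a point the paper's own proof glosses over, so it is a welcome addition rather than a deviation.
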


\begin{proof} We have to show that (R$^k_{\text{w}}$) is equivalent to $(\SysEnv)' \models \kWC\tuple{c,d}$. \\ 
``$\dder$'': Let $T(G)$ be the transformation tree of $G$ in $\overline{(\SysEnv)'}$. Assume that $G \models c$. We show that for every infinite path $\tuple{G=G_0 \dder G_1 \dder \ldots }$ starting from $G$, holds $G_1 \models \opAG(\env \dder\bigvee_{j=0}^k \opEX^j d  )$, i.e.,\ for every infinite path $\tuple{G=G_0 \dder G_1 \dder \ldots }$ starting from $G$, $G_i \models \env \dder\bigvee_{j=0}^k \opEX^j d$ for all $i \ge 1$. Let $M= G_i$ with $i \ge 1$ and $M \models \env$ be an element of an infinite path $\tuple{G=G_0 \dder G_1 \dder \ldots }$. Since $M \models \env$, we can construct a subsequence $S:=\tuple{G \dder_{(\Sys\Env)'}^* H \dder_{\Env'_A} M}$. By definition of weak $k$-step correctness, there is an $N$ with $N \models d$ and $M \dder_{\overline{(\SysEnv)'}}^{\le k}  N$. This means, there is a $0 \le j \le k$ with $M \models \opEX^j d $, so $M \models  \bigvee_{j=0}^k \opEX^j d$. We showed $G \models \opAX \opAG ( \env \dder \bigvee_{j=0}^k \opEX^j d )$. \\
``$\Leftarrow$'': Let $\tuple{G \dder_{(\Sys\Env)'}^* H \dder_{\Env'_A} M  }$ be transformation sequence with $G \models c$. We have that $G \models \kWC\tuple{c,d}$, this means $G \models \opAX \opAG ( \env \dder \bigvee_{j=0}^k \opEX^j d )$. Thus, there exists a $0 \le j \le k$ with $M \models \opEX^j d $, i.e.,\ there is an $N$ with $N \models d$ and $M \dder_{\overline{(\SysEnv)'}}^{\le k}  N$.
\end{proof}

\begin{proof}[of Theorem 2] This follows directly by Lemma \ref{lemIV} and \ref{lemV}. \end{proof}}

\section{Related Concepts} \label{related}
The basic concept of graph transformation (systems) serving as theoretical foundation for our approach is presented in \cite{Ehrig06}. \\ 
We mention some approaches for \emph{graph-transformational interacting systems}, of which the concept of adverse conditions are a special case: \par
In \cite{Corradini09}, \emph{graph transformation systems with dependencies (d-GTSs)} are presented to model reactive systems. In this setting, the system has to react to stimuli from the environment. Reactions are triggered by the presence of \emph{signals} which are subgraphs in the derived graphs. To this aim, transformation rules are equipped with a \emph{dependency relation} which describes extra relationships between the deleted and created elements. To pave the way for implementation, \emph{transactional graph transformation systems (T-GTSs)} are used. Both d-GTSs and T-GTSs are extensions of GTSs, whereas in our approach, the main constructs are GTSs. \par
In \cite{Taentzer99}, the concept of \emph{distributed graph transformation}, which is using graph transformation as underlying formal framework, is presented and applied to distributed systems. In contrast to our approach, the considered \emph{distributed graphs} comprise two abstraction levels: the network and the local level. On the network level, the topological structure of a distributed system is specified. The local level contains the description of local objects. A \emph{distributed graph rule} consists of a \emph{network rule} and \emph{local rules}\ignore{ for all network nodes which are preserved by the network rule. All newly created network nodes as well as those which should be deleted are equipped with a local graph}. \par
In \cite{Wang06}, a graph-transformational approach for modeling reorganization in multi-agent systems is introduced. Similarly to \cite{Taentzer99}, the considered \emph{multi-graphs} comprise three levels: on the top level, \emph{role graphs} composed of \emph{roles} and their interrelations; on the bottom level, \emph{agent graphs} consisting of \emph{agents} and their interrelations; on the mid level, \emph{connection graphs} having roles and agents as nodes, and the role competences for agents as their directed edges. A \emph{mulit-level graph transformation rule} consists of three graph transformation rules each for one of the three levels. The semantics are specified by \emph{graph class expressions} on each of the three levels.  \\
In \cite{Kreowski13}, \emph{graph-transformational swarms} are presented. A graph-transformational swarm consists of \emph{members} which interact simultaneously in an environment represented by a graph. The members are all of the same \emph{kind} or of different kinds. Kinds and members are modeled as \emph{graph transformational units} each consisting of a set of graph transformation rules. \emph{Control conditions} regulate the application of rules; the regulation automaton plays a similar role in our approach. Graph class expressions specify the initial and terminal graphs whereas we use explicit graph constraints for the pre-/postcondition. \par
In \cite{Kreowski06}, the concept of \emph{autonomous units}, which are a closely related notion to graph transformational units \cite{Kreowski13}, is introduced to model distributed systems composed of highly self-controlled components. A class of control conditions may reduce the degree of nondeterminism of rule applications. Moreover, graph class expressions are used to specify sets of initial and terminal graphs. \par
All mentioned concepts of graph-transformational interacting systems are based on graph transformation as underlying framework. By contrast, the main constructs of our approach are basic graph transformation systems. \par
A notion of correctness for \emph{graph programs} is considered in \cite{Pennemann06,Plump13}.\ignore{ Graph programs are programs whose basic components are  sets of rules, i.e.,\ graph transformation systems, and which are characterized by their closedness under non-deterministic composition, sequential composition, and iteration (as-long-as-possible).} In contrast to our definition, the correctness of graph transformation systems is usually interpreted as its correctness as graph program. Correctness in our sense means that all graph programs over the considered graph transformation system are correct. \par
\emph{Correctness} and \emph{verification} of programs are addressed in \cite{Olderog97} including also verification of \emph{non-deterministic} and \emph{distributed} programs. \par
In \cite{Flick16}, correctness of \emph{graph programs under adverse conditions} is considered. Unlike our interpretation of adverse conditions, the system is \emph{controlled} while the environment is \emph{uncontrolled}. Correctness is defined w.r.t. pre- and postconditions which are realized in form of $\mu$-\emph{conditions} which generalize graph conditions. \par
\emph{Graph grammars} under adverse conditions are considered in \cite{Peuser18} as well as \emph{temporal graph conditions}.  In contrast to our approach, system and environment \emph{play against each other}, i.e., their interaction is modeled by a \emph{game}. Correctness is achieved by a \emph{winning strategy}. \par
\ignore{In \cite{Manna69}, the relation between correctness of programs and first-order formulas is investigated. More precisely, verification is reduced to the \emph{satisfiability} of first-order formulas. \\} We adhered to \cite{Emerson90,Baier08} for the syntax and semantics of LTL\longv{/CTL}. Methods for \emph{model checking} are presented also in \cite{Baier08}. Tool support for LTL\longv{/CTL} model checking is due to GROOVE \cite{GROOVE} available. \ignore{By contrast, the tool Henshin \cite{Henshin} supports the more expressive $\mu$-calculus.}
In \cite{Vardi07}, a concept for LTL on \emph{finite traces} is introduced. For our purposes, completing finite sequences to infinite by applying the identical rule, and using the satisfaction notion for infinite sequences, is sufficient. \par
In \cite{Giese19}, a \emph{metric temporal logic} over \emph{typed attributed graphs} is introduced. The atoms are graph conditions. Instead of \emph{discrete} temporality like in the case of LTL, the \emph{continuous} set of non-negative real numbers is considered. The main point of the definition is the until-operator w.r.t.\ an interval of real numbers. This approach also incorporates the tracking of elements over time, which is not possible with LTL/CTL constraints.

%\shortv{\newpage}
\section{Conclusion and Outlook}\label{conc}
To sum up, in this paper, we have provided \begin{itemize} \item[(1)] a definition of joint graph transformation systems which involve the system, the environment, and a regulation automaton modeling the interaction between them, \item[(2)] \longv{three}\shortv{two} instances of correctness notions for joint graph transformation systems, which generalize the correctness for graph transformation systems, \item[(3)] a reduction theorem stating that \shortv{both}\longv{the presented} correctness notions are expressible as LTL\longv{/CTL} constraints. \end{itemize}
This approach points out a way for verification under adverse conditions subject to the condition that LTL\longv{/CTL} model checking is possible in the considered case. This may not always be the case since we consider, in general, infinite state spaces (infinite sets of graphs).

\ignore{
\begin{center}\scalebox{0.8}{
\tikzset{
    state/.style={
           rectangle,
           rounded corners,
           draw=black,
           minimum height=2em,
           inner sep=2pt,
           text centered,
           },
}

\begin{tikzpicture}[->,>=stealth']
\node[state,fill=whitegray] (box1) at (0,0) { \begin{tabular}{c} \\ \\ modeling \\ adverse \\ conditions \\ \hspace*{1mm} \\ \hspace*{1mm} \end{tabular}};
\node[state,fill=whitegray] (box1) at (7,0) { \begin{tabular}{c}\\  \\ model \\ checking \\ algorithm \\ \hspace*{1mm} \\ \hspace*{1mm} \end{tabular}};

\draw (-5,1.2) -- node[above]{system}(-0.9,1.2);  
\draw (-5,0.4) -- node[above]{environment}(-0.9,0.4);
\draw (-5,-0.4) -- node[above]{regulation automaton}(-0.9,-0.4); 
\draw (-5,-1.2) -- node[above]{pre- and postcondition}(-0.9,-1.2); 
\draw (0.9,0.4) -- node[above]{joint GTS}(6.15,0.4); 
\draw (0.9,-0.4) -- node[above]{LTL\longv{/CTL} constraint}(6.15,-0.4); 
\draw (7.85,0.4) -- node[above]{yes}(10.5,0.4); 
\draw (7.85,-0.4) -- node[above]{no}(10.5,-0.4); 
\end{tikzpicture}
}
\end{center}}

\begin{center}\scalebox{0.75}{
\tikzset{
    state/.style={
           rectangle,
           rounded corners,
           draw=black,
           minimum height=2em,
           inner sep=2pt,
           text centered,
           },
      rec/.style={
          rectangle,
           rounded corners,
           dashed,
           draw=black,
           minimum height=2em,
           inner sep=2pt,
           text centered,
           },
}

\begin{tikzpicture}[->,>=stealth']
\node[state,fill=whitegray] (box1) at (0,0.5) { \begin{tabular}{c} \\ \\ \hspace*{3mm}construct\hspace*{3mm} \\ \hspace*{1mm} \\ \hspace*{1mm} \end{tabular}};
\node[state,fill=whitegray] (box3) at (0,-1.2) { \begin{tabular}{c}\hspace*{2mm} reduction \hspace*{2mm}\end{tabular}};
\node[state,fill=whitegray] (box1) at (7,-0.4) { \begin{tabular}{c}  \\ model \\ checking \\ algorithm \\ \hspace*{1mm}  \end{tabular}};

\draw (-5.2,1.2) -- node[above]{system}(-1.3,1.2);  
\draw (-5.2,0.4) -- node[above]{environment}((-1.3,0.4);
\draw (-5.2,-0.4) -- node[above]{regulation automaton}(-1.3,-0.4); 
\draw (-5.2,-1.2) -- node[above]{correctness notion}(-1.3,-1.2); 
\draw (1.3,0.4) -- node[above]{joint GTS}(5.95,0.4); 
\draw (1.3,-1.2) -- node[above]{LTL\longv{/CTL} constraint}(5.95,-1.2); 
\draw (8.05,0.4) -- node[above]{\emph{yes}}(10.5,0.4); 
\draw (8.05,-1.2) -- node[above]{\emph{no}}(10.5,-1.2); 

\node[rec] (box4) at (0.3,0) { \begin{tabular}{c} \\ \\ \\\\ \hspace*{9.9cm} \\  \hspace*{1mm}\\  \hspace*{1mm}\\  \hspace*{1mm}\end{tabular} };
\node (box5) at (0.3,2.2) {modeling adverse conditions};
\end{tikzpicture}
} \end{center}\begin{center} Fig. 10: overview of the presented construction and reduction \end{center}
Our future work will focus mainly on verification under adverse conditions using available model checking techniques. More precisely, we consider the following problems: \begin{itemize} \item[(1)] Model checking for joint graph transformation systems (also using GROOVE \cite{GROOVE}). \item[(2)] \emph{Well-structuredness} \cite{Koenig14} of joint graph transformation systems. \item[(3)] Integration of \emph{probabilistic} graph transformation \cite{Heckel04}. \end{itemize} Regarding (2), we conjecture that the class of well-structured graph transformation systems (w.r.t.\ the same well-quasi order) is ``closed'' under the construction of joint graph transformation systems. \vspace*{4mm} \\
\textbf{Acknowledgement.} We are grateful to Annegret Habel, Ernst-R\"udiger Olderog, Christian Sandmann, Nick W\"urdemann, and the anonymous reviewers for their helpful comments to this paper.

\begingroup

\bibliographystyle{eptcs}
\bibliography{ProposalBibAll}
\apx{\input{ModelingAdverseConditionsPaper_PreFinal_append}}
\endgroup 
\end{document}